\theoremstyle{definition}
\newtheorem{theorem}{Theorem}
\newtheorem{proposition}[theorem]{Proposition}
\newtheorem{corollary}[theorem]{Corollary}
\theoremstyle{definition}
\newtheorem{definition}[theorem]{Definition}
\theoremstyle{remark}
\DeclareMathOperator*{\E}{\mathbf{E}}
\DeclareMathOperator*{\Var}{\mathbf{Var}}
\DeclareMathOperator{\D}{\mathbf{D}_{KL}}
\DeclareMathOperator*{\argmin}{arg\,min}
\newcommand{\Eb}[1]{\E\left[#1\right]}
\newcommand{\Varb}[1]{\Var\left[#1\right]}
\newcommand{\R}{\mathbb{R}}
\newcommand{\Z}{\mathbb{Z}}
\newcommand{\N}{\mathbb{N}}
\newcommand{\pr}[1]{\mathbf{P}\left\{#1\right\}}
\newcommand{\unifint}[1]{\text{Unif}\left[#1\right)}
\newcommand{\unifset}[1]{\text{Unif}\left[#1\right]}
\newcommand{\ucell}{V}
\newcommand{\cell}{\mathcal{C}}
\newcommand{\norm}[1]{\left\lVert#1\right\rVert_2}
\newcommand{\normsqr}[1]{\left\lVert#1\right\rVert_2^2}
\newcommand{\Gm}[1]{\Gamma\left(#1\right)}
\newcommand{\gtwo}{\gamma_2}
\newcommand{\gfour}{\gamma_4}
\newcommand{\lattice}{\Lambda}
\newcommand{\fnice}{\bar{f}}
\newcommand{\cnst}{C}
\begin{document}
\title{Gaussian Channel Simulation with \\ Rotated Dithered Quantization} 

\author{%
  \IEEEauthorblockN{Szymon Kobus\IEEEauthorrefmark{1},
                    Lucas Theis\IEEEauthorrefmark{2},
                    Deniz G\"und\"uz\IEEEauthorrefmark{1}}
  \IEEEauthorblockA{\IEEEauthorrefmark{1}%
                    Department of Electrical and Electronics Engineering, Imperial College London, London, UK}
  \IEEEauthorblockA{\IEEEauthorrefmark{2}%
                    Google DeepMind, London, UK}
  \IEEEauthorblockA{\{szymon.kobus17, d.gunduz\}@imperial.ac.uk, theis@google.com}
}
\maketitle
\begin{tikzpicture}[remember picture,overlay]
		\node[anchor=north,yshift=-10pt] at (current page.north) {\parbox{\dimexpr\textwidth-\fboxsep-\fboxrule\relax}{
				\centering\footnotesize This paper has been accepted for presentation at the 2024 IEEE International Symposium on Information Theory. \textcopyright 2024 IEEE.
				}};
\end{tikzpicture}

\begin{abstract}
    Channel simulation involves generating a sample $Y$ from the conditional distribution $P_{Y|X}$, where $X$ is a remote realization sampled from $P_X$. This paper introduces a novel approach to approximate Gaussian channel simulation using dithered quantization. Our method concurrently simulates $n$ channels, reducing the upper bound on the excess information by half compared to one-dimensional methods. When used with higher-dimensional lattices, our approach achieves up to six times reduction on the upper bound. Furthermore, we demonstrate that the KL divergence between the distributions of the simulated and Gaussian channels decreases with the number of dimensions at a rate of $O(n^{-1})$.
\end{abstract}

\section{Introduction}
Channel simulation concerns sampling from a distribution $Y\sim P_{Y|X}$ at the decoder, given a realization of $X\sim P_X$ at the encoder, while transmitting the least number of bits.  It can be used in end-to-end learned lossy compression, where channel simulation can replace the non-differentiable quantization step and remove the mismatch between the soft quantization employed during training and hard quantization during inference \cite{flamich_compressing_2020, agustsson_universally_2020, theis_lossy_2022, havasi_minimal_2019}. 
It has been recently used in federated learning \cite{isik_communication-efficient_2023} to increase the communication efficiency, where the clients simply enable the parameter server to sample from their updated model distributions, rather than sending exact model updates. It also finds application in differential privacy, where the simulated channel corresponds to the desired noise distribution \cite{triastcyn_dp-rec_2021, pmlr-v151-shah22b, burak_comm_2024}.

Asymptotic results for channel simulation were studied in \cite{cuff_communication_2008}.
In the presence of common randomness between the encoder and the decoder the rate reduces to mutual information $I(X;Y)$, which was discovered earlier in the context of quantum information theory in \cite{bennett_entanglement-assisted_2002, winter_compression_2002}. For the one-shot setting, it is shown in \cite{li_strong_2018} that for any $X,Y$, channel simulation can be performed within $I(X;Y)+\log\left(I(X;Y)+1\right)+5~\mathrm{bits}$. This is achieved by generating a list of candidate samples from prior $P_Y$ at both the encoder and the decoder. Using $P_{Y|X}$ the encoder communicates an index from this list, and the corresponding sample is recovered by the decoder, which is guaranteed to follow $P_{Y|X}$.
However, the number of samples that need to be generated (i.e., the computational complexity) of this method is proportional to $\exp(D_\infty({P_{Y|X}||{P_Y}}))$ \cite{maddison2016ppmc}. Importance sampling \cite{harsha_communication_2010} is an approximate method of channel simulation in which the number of drawn samples can be exchanged for the quality of obtained samples \cite{havasi_minimal_2019}, but it results in higher coding costs. Ordered random coding \cite{theis_algorithms_2022} is a synthesis of the two approaches; it mimics the coding cost of Poisson functional representation and computation control of importance sampling. However, for the guarantees of importance sampling and ordered random coding to hold, the number of drawn samples needs to be at least $\exp(\D(P_{X|Y}||P_X)+t), t>0$. 

The computational complexity is a major hurdle for channel simulation in practical applications. It is shown in \cite{agustsson_universally_2020} that there is no general channel simulation algorithm with a computational cost that is polynomial in the information content. On the other hand, the computational complexity can be low for specific distributions. Subtractive dithered quantization \cite{ziv_universal_quantization, zamir_universal_1992} is a method to simulate an additive uniform noise channel $Y=X+U$, where $U$ is uniformly distributed over $[-\sfrac{1}{2},\sfrac{1}{2})$. Unlike the aforementioned approaches, it requires only a single common sample. Simulating a scalar Gaussian channel using one-dimensional dithered quantization is considered in \cite{agustsson_universally_2020} using a randomized scale variable, which follows from the scale mixture of uniform distributions \cite{CHOY2003227}. This approach was further extended in \cite{hegazy_randomized_2022} to simulate any unimodal additive noise distribution with randomized scaling and offsets.

This work focuses on using dithered quantization in $n$ dimensions to simulate an additive Gaussian channel. In particular, we are interested in the simulation of $Y=X+C$, where $C \in \R^n$ is independent of $X$ and follows a multivariate Gaussian distribution.
The scale mixture of uniform distributions can be generalized to multi-variate exponential power distributions using a uniform distribution over a unit ball in $n$ dimensions \cite{FUNG20082883}; however, the error distribution of dithered quantization using $n$-dimensional lattices is uniform not over the unit ball, but over the Voronoi cell of the lattice. A complicated vector quantization scheme that results in an error uniformly distributed over the unit ball is proposed in \cite{Ling:ITW:23}. In this paper, we take an alternative approach and consider instead approximate channel simulation using dithered quantization. Our results show that using higher-dimensional lattice quantizers leads to significantly reduced coding overhead. Additionally, as our scheme is based on dithered quantization, it inherits its low computational complexity.

The contributions of this work are summarized as follows:
\begin{itemize}
    \item We introduce a novel and computationally efficient method of approximate Gaussian channel simulation based on subtractive dithered quantization.
    \item We show that the KL divergence between the error distribution $Y-X$ induced by the proposed method and a Gaussian multi-variate distribution scales as $O(n^{-1})$.
    \item We upper bound the required number of bits of the proposed method for different lattice quantizers, showing two-fold decrease for an integer lattice and up to six-fold reduction for higher-dimensional lattices compared to layered quantization in \cite{hegazy_randomized_2022}.
\end{itemize}

{\textbf{Notation.}} For a vector $x = (x_1, \ldots, x_n) \in\R^n$, we have $x_{a:b}=(x_a, x_{a+1},\dots x_b)^\top$ and $x_{:b}=x_{1:b}$. The $\ell_2$ norm of $x$ is denoted $\norm{x}=\sqrt{\sum_{i=1}^n x_i^2}$, $\Gm{\cdot}$ stands for Gamma function, $\lfloor\cdot\rceil$ for rounding to the nearest integer, while $h(X)$ denotes the differential entropy of random variable (r.v.) $X$.

\section{Dithered Quantization}
Subtractive dithered quantization is a method of channel simulation, where noise $C$ is uniform over the Voronoi cell of a lattice. For the one-dimensional case, we have $C\sim \unifint{-\sfrac{1}{2}, \sfrac{1}{2}}$. We assume the encoder and decoder have access to a source of common randomness independent of $X$, using which they can draw samples from a uniform r.v. $U'\sim\unifint{-\sfrac{1}{2}, \sfrac{1}{2}}$. The encoder rounds $X-U'$ to the nearest integer value and communicates it to the decoder, which in turn adds the same realisation of $U'$ to it, obtaining:
\begin{equation}
    \lfloor X - U' \rceil + U' \sim X + U,
\end{equation}
where $U\sim\unifint{-\sfrac{1}{2}, \sfrac{1}{2}}$. It is shown in \cite{ziv_universal_quantization} that this process is efficient, in terms of mean squared error, as a quantization method even for high dimensional sources. 

\begin{definition}[Lattice]
    Given basis vectors $e_1, e_2, \dots, e_n$, an $n$-dimensional lattice $\lattice$ is all their integral combinations:
    \begin{equation}
        \lattice = \left\{\sum_{i=1}^n k_i e_i: k_i\in\Z\right\}.
    \end{equation}
    The quantization function of lattice $\lattice$ is defined as:
    \begin{equation}
        q_{\lattice}(x)\triangleq \argmin_{e\in\lattice}\norm{x-e},
    \end{equation}
    and the Voronoi cell of lattice $\lattice$ as:
    \begin{equation}
        \cell_\lattice\triangleq\left\{x\in\R^n\,:\,q_\lattice(x)=0\right\}.
    \end{equation}
\end{definition}
For one-dimensional dithering, rounding $\lfloor\cdot\rceil$ can be interpreted as quantization to the closest point in the one-dimensional lattice $\Z$.
We can generalize this method to higher-dimensional lattices. The uniform noise $U'$ is replaced by uniform samples from the Voronoi cell $\cell_\lattice$, i.e., $V'\in\R^n,V'\sim\unifset{\cell_\lattice}$, and rounding is replaced by quantization to the closest lattice point $q_{\lattice}$. For $X\in\R^n$, this procedure simulates an $n$-dimensional channel with uniform noise over the Voronoi cell of the lattice $\cell_\lattice$:
\begin{equation}
     q_\lattice(X - V') + V' \sim X + V,
\end{equation}
where $V\sim\unifset{\cell_\lattice}$.

If we apply dithered quantization to $M^{-1}X$, with a fixed invertible matrix $M\in \R^{n\times n}$, the decoder would obtain $M^{-1}X + V$. If $M$ is known at the decoder, it can recover $Y=M(M^{-1}X+V)=X+MV$ with quantization error $MV$.

A special orthogonal group $SO(n)$ represents the set of all possible rotations in $n$ dimensions.
Let $R\in \R^{n\times n}$ be a random rotation matrix distributed according to the Haar measure over $SO(n)$ (i.e., uniform rotation). Let $R$ be generated with common randomness, and be available at both the encoder and decoder.
Then, the dithering procedure can be applied to $R^\top X$ (since $R^{-1}=R^\top$) obtaining quantization error $RV$. For a given realization of $R$, $P_{RV|R}$ is uniform over the Voronoi cell of the lattice rotated by $R$, since $\det(R)=1$. This procedure can be thought of as applying dithered quantization to $X$ using a randomly rotated lattice, and is captured by the equation:
\begin{equation}
    R (q_\lattice(R^\top X - V') + V') \sim X + RV~.
\end{equation}

Unlike $P_{RV|R}$, $P_{RV}$ is not uniform, and has a rotationally invariant density function over an $n$-dimensional ball. A function $f:\R^n\to\R$ is rotationally invariant if all $a,b\in\R^n$ with $\norm{a}=\norm{b}$ $f(a)=f(b)$ holds. Hence, $f$ can be equivalently described with a function $g:\R\to\R$, where $f(a)=g(\norm{a})$. 
\begin{proposition} \label{proposition:R_makes_rot_invariant}
    Let $V\in\R^n$ be a r.v. with distribution $P_V$, and $R\in\R^{n\times n}$ be drawn from $SO(n)$ according to the Haar measure. Then, the probability density function of $RV$ is rotationally invariant.
\end{proposition}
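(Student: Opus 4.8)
The plan is to show that the density of $RV$ is invariant under any fixed rotation $Q \in SO(n)$, which is the standard way to establish rotational invariance. First I would condition on $V$: for a fixed value $V = v$, the conditional law of $RV$ given $V = v$ is the pushforward of the Haar measure on $SO(n)$ under the map $R \mapsto Rv$. The key observation is that for any fixed $Q \in SO(n)$, the map $R \mapsto QR$ preserves the Haar measure on $SO(n)$ (left-invariance of Haar measure on a compact group), so $R$ and $QR$ have the same distribution, hence $Rv$ and $QRv$ have the same distribution. Since $V$ is independent of $R$, integrating over $P_V$ gives that $RV$ and $QRV$ have the same distribution for every fixed $Q \in SO(n)$; that is, $RV \disteq Q(RV)$.

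Next I would translate the distributional invariance into a statement about the density. One has to be slightly careful: $RV$ need not be absolutely continuous in general (e.g.\ if $V$ is a point mass away from the origin, $Rv$ is supported on a sphere), so strictly the cleanest statement is that the \emph{distribution} of $RV$ is rotationally invariant; whenever a density $f_{RV}$ exists, the change of variables $y \mapsto Qy$ with $|\det Q| = 1$ turns $RV \disteq Q(RV)$ into $f_{RV}(y) = f_{RV}(Q^{-1} y)$ for (Lebesgue-a.e.) $y$ and all $Q \in SO(n)$. Finally I would note that acting by $SO(n)$ is transitive on each sphere $\{\norm{y} = r\}$ for $n \geq 2$ (and the $n=1$ case is trivial), so $f_{RV}$ depends on $y$ only through $\norm{y}$, i.e.\ it is rotationally invariant in the sense defined just before the proposition, giving the desired $g:\R\to\R$ with $f_{RV}(y) = g(\norm{y})$.

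The main obstacle, and the only place requiring genuine care rather than bookkeeping, is the invocation of the invariance property of the Haar measure: one must use \emph{left}-invariance of Haar measure on the compact group $SO(n)$ to justify $QR \disteq R$, and then combine this with the independence of $R$ and $V$ (via Fubini/conditioning) so that the invariance survives integration against $P_V$. A secondary subtlety worth a sentence is the a.e.\ qualifier and the existence-of-density caveat mentioned above, but since the proposition as stated already presupposes that $RV$ has a density, this can be handled in one line. Everything else — transitivity of $SO(n)$ on spheres, the unit Jacobian of a rotation — is routine.
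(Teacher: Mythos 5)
Your proposal is correct and follows essentially the same route as the paper's proof: left-invariance of the Haar measure gives $QR\disteq R$, hence $QRV\disteq RV$, and transitivity of $SO(n)$ on spheres converts this into $f_{RV}(a)=f_{RV}(b)$ whenever $\norm{a}=\norm{b}$. Your extra care about conditioning on $V$, the unit Jacobian, and the a.e./existence-of-density caveats is a welcome tightening of details the paper leaves implicit, but it is not a different argument.
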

\begin{proof}
    Let $M\in \R^{n\times n}, MM^\top=I, \det(M)=1$ be a fixed rotation matrix, then  $MRV$ and $RV$ follow the same distribution, since $R$ is uniformly distributed over rotation matrices, so is $MR$ (Haar measure is invariant under left multiplication). For all $a,b\in\R^n$ with $\norm{a}=\norm{b}$ there exists a rotation matrix $M$ such that $Ma=b$.
    Therefore,
    \begin{align}
        p_{RV}(a)=p_{RV}(M^\top b)=p_{MRV}(b)=p_{RV}(b).
    \end{align}
\end{proof}
 
\section{Outline of rotated dithered quantization}
The goal of this work is to simulate an $n$-dimensional channel with Gaussian noise. To simulate a Gaussian channel with known covariance $\Sigma=AA^T$ it suffices to simulate one with identity covariance and apply it to an input sample $A^{-1} X$ at the encoder. The decoder can multiply the output $A^{-1} X + G, G\sim \mathcal{N}(0, I_n)$ by $A$, to obtain $X + AG, AG\sim\mathcal{N}(0,\Sigma)$. Thus, we focus on simulating an $n$-dimensional Gaussian channel with identity covariance.

The probability density function of a zero-mean Gaussian distributed $G$ with an identity covariance matrix, i.e., $G\sim \mathcal{N}(0, I_n)$, is rotationally invariant, and can be fully described by the distrobution $P_{\norm{G}}$ of the $\ell_2$-norm of $G$. Using dithered quantization we can achieve uniform error distribution $\ucell$ over the Voronoi cell of a lattice. Combining dithered quantization with random rotation $R$ the probability density function of error distribution becomes $RV$, which is also rotationally invariant. 
Subsequently, the task simplifies to `matching' the distributions of $\norm{G}$ and $\norm{RV}$. This will be achieved by scaling and perturbing the reconstruction.

For a rotationally invariant random variable $X$, it is enough to characterize the distribution of $\normsqr{X}$ as we can obtain that of $\norm{X}$ through a deterministic transformation $p_{\norm{X}}(t)=2t\;p_{\normsqr{X}}(t^2)$. For $G\sim \mathcal{N}(0, I_n)$, where $G_i\sim \mathcal{N}(0,1)$, we have
\begin{equation}
    \normsqr{G}=\sum_{i=1}^n G_i^2 \stackrel{\text{approx.}}{\sim} \mathcal{N}(\mu=n, \sigma^2=2n), \label{eqn:l2_gaussian}
\end{equation}
follows a chi-squared distribution $\normsqr{G}\sim\chi^2_n$. By the central limit theorem, $\normsqr{G}$ converges to a Gaussian distribution as $n$ increases. 

\subsection{Integer lattice}
One of the simplest lattices is an integer lattice $\Z^n$, the Voronoi cell of which is a unit $n$-dimensional cube. Dithered quantization with such a lattice is equivalent to performing one-dimensional dithering for each dimension. Denoting the error of quantization with this lattice by $U$, where $U_i\sim\unifint{-\sfrac{1}{2}, \sfrac{1}{2}}$, we have
\begin{equation}
    ||U||_2^2=\sum_{i=1}^n U_i^2 \stackrel{\text{approx.}}{\sim} \mathcal{N}\left(\frac{n}{12}, \frac{n}{180}\right).
\end{equation}

\subsection{Distribution matching with perturbation} 
To match the distribution of the rotated dithered quantization to a Gaussian, we can scale the lattice by $s\in\R$ and add an independent noise term $Z\in\R^n$, where $P_{Z_i}=P_{Z_j}, \forall_{i,j}$, at the decoder. The resulting reconstruction follows $Y=X+R(sU+Z)$ while the quantization error is $R(sU+Z)$. By Proposition \ref{proposition:R_makes_rot_invariant}, $R(sU+Z)$ is rotationally invariant, which implies $\normsqr{(R(sU+Z))}=\normsqr{(sU+Z)}$. Therefore, to simplify the analysis and notation, we focus on $\normsqr{(sU+Z)}$:
\begin{align}
        &||(sU+Z)||_2^2=\sum_{i=1}^n (sU+Z)_i^2 \stackrel{\text{approx.}}{\sim}\mathcal{N}\left(\bar{\mu},\bar{\sigma}^2\right),\\ 
        &\text{where }\bar{\mu}=n \Eb{(sU+Z)_1^2} \text{ and } \bar{\sigma}^2=n\Varb{(sU+Z)_1^2}.
        \nonumber
\end{align}
To match it with a Gaussian error distribution we consider the second and fourth moments of the marginals:
\begin{align}
    \Eb{(sU+Z)_i^2} &= \Eb{s^2 U_i^2} + 2\Eb{sU_iZ_i} + \Eb{Z_i^2} \nonumber \\
    &= \frac{s^2}{12} + \Eb{Z_i^2}, \\
    \Eb{(sU+Z)_i^4} &= \Eb{s^4 U_i^4} + 4\Eb{s^3 U_i^3 Z_i} \nonumber
    + 6\Eb{s^2 U_i^2 Z_i^2} + 4\Eb{sU_i Z_i^3} + \Eb{Z_i^4}   \nonumber \\
    &= \frac{s^4}{80} + 6s^2\Eb{U_i^2 Z_i^2} + \Eb{Z_i^4} \label{equation:4th_moment_terms} \\
    &= \frac{s^4}{80} + \frac{s^2\Eb{Z_i^2}}{2} + \Eb{Z_i^4}, \\
    \Varb{(sU+Z)_i^2} &= \frac{s^4}{180} + \frac{s^2\Eb{Z_i^2}}{3} - \Eb{Z_i^2}^2 + \Eb{Z_i^4}, 
\end{align}
since $U_i$ is symmetric around the origin, the expectation of its odd powers vanish, 
and in (\ref{equation:4th_moment_terms}), 
we have $\Eb{(Z_iU_i)^2} = \Eb{Z_i^2} \Eb{U_i^2}$ by the independence of $Z$ and $U$.
Matching the moments of $\normsqr{G}$ in (\ref{eqn:l2_gaussian}) to those of $\normsqr{(sU+Z)_i^2}$ identified above, we get
\begin{align}
    &\begin{cases}
    n &= n \left( \frac{s^2}{12} + \Eb{Z_i^2} \right), \\
    2n &= n \left(\frac{s^4}{180} + \frac{s^2\Eb{Z_i^2}}{3} - \left(\Eb{Z_i^2}\right)^2 + \Eb{Z_i^4} \right)
    \end{cases} 
    \\
    &\text{which results in} \nonumber \\
    &\begin{cases}
    \Eb{Z_i^2} &= 1 - \frac{s^2}{12}, \label{equation:Z_2nd_moment} \\
    \Eb{Z_i^4} &= \frac{7s^4}{240} - \frac{s^2}{2}+3.
    \end{cases}
\end{align}
The scale parameter $s$ determines the size of the quantization bins; the larger the bins are, the fewer bits need to be communicated on average. However, Equation (\ref{equation:Z_2nd_moment})  places an upper bound of $s\leq2\sqrt{3}$ on the scale parameter since the second moment of any distribution is non-negative.

A distribution that achieves this upper bound is the Weibull distribution. Let $Z_i\sim \text{Weibull}(\lambda, k), \lambda>0, k>0, \pr{Z_i\leq t}=1-e^{-(\frac{t}{\lambda})^k}$ for $t>0$, and $0$ otherwise. The moments of Weibull distribution are $\Eb{Z_i^n} = \lambda^n \Gamma(1+\frac{n}{k})$. Then, we have
\begin{align}
    &\begin{cases}
        \Eb{Z_i^2} = \lambda^2 \Gamma(1+\frac{2}{k}) = 1 - \frac{s^2}{12}, \\
        \Eb{Z_i^4} = \lambda^4 \Gamma(1+\frac{4}{k}) = \frac{7s^4}{240} - \frac{s^2}{2}+3.
    \end{cases}
\end{align}
The non-negative solutions for these equations are
\begin{align}
    s(k) &= 2\sqrt{3} \sqrt{ \frac
    {\displaystyle \sqrt{30}\gtwo\sqrt{-3\gtwo^2+\gfour} +15\gtwo^2 - 5\gfour}
    { 21\gtwo^2 - 5\gfour }}, \label{equation:s_for_weibull} \\
    \lambda(k) &= \sqrt{\frac{-\sqrt{30}\sqrt{-3\gtwo^2+\gfour}+6\gtwo}
    { 21\gtwo^2 - 5\gfour }}, \label{equation:l_for_weibull}
\end{align}
where $\gamma_2\triangleq\Gm{1+\frac{2}{k}}$ and $\gamma_4\triangleq\Gm{1+\frac{4}{k}}$. Equations (\ref{equation:s_for_weibull}) and (\ref{equation:l_for_weibull}) are plotted in Figure \ref{figure:weibull_noise_parameters_log}, we observe that, for small values of $k$, $s(k)$ is close to the upper bound $2\sqrt{3}$, which maximizes the size of the quantization bins, minimizing the rate, for which the parameter $\lambda$ approaches zero. 


\begin{figure}[t]
    \centering
    \includegraphics[width=0.5\columnwidth]{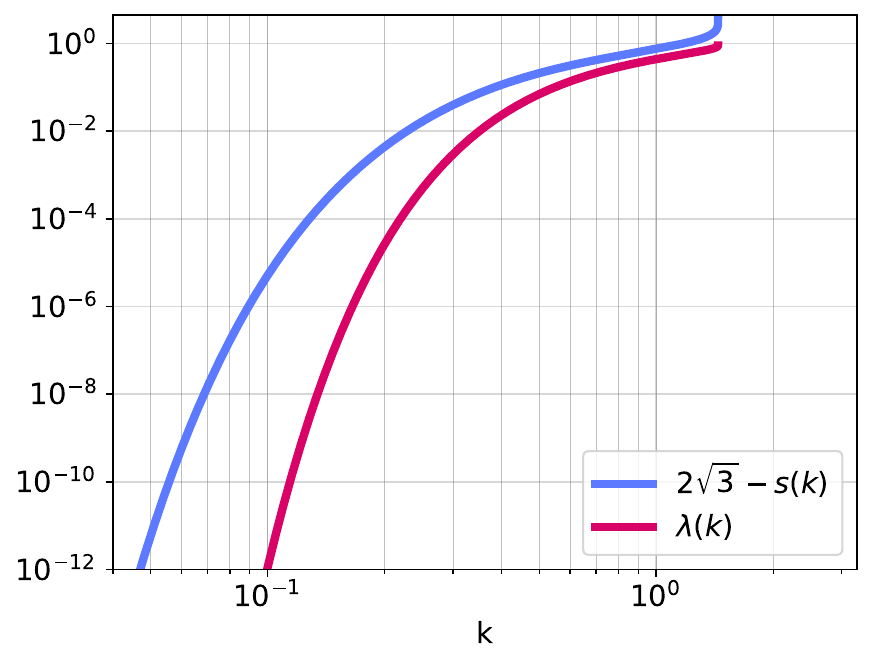}
    \caption{Parameters $\lambda(k)$ and $2\sqrt{3}-s(k)$ in Equations (\ref{equation:s_for_weibull}) and (\ref{equation:l_for_weibull}) with logarithmic scale.}
\label{figure:weibull_noise_parameters_log}
\end{figure}

\section{KL-divergence analysis} \label{section:kl_analysis}
In the previous section, we have chosen the scale parameter and the the noise variable such that the moments of the transformed dithered quantization noise variable match those of the desired Gaussian noise. In this section we compute the KL-divergence between the two, and show it diminishes as $O(n^{-1})$.
\begin{theorem}[Theorem 2 in \cite{moulin_kullback-leibler_2014}]  \label{theorem:kl_sum_convergence}
    Let $X_i\in\R, Y_i\in\R$ be scalar r.v.s with $\text{support}(X_i) \subseteq \text{support}(Y_i)$, with the same means, bounded fourth moments and bounded, continuously differentiable probability density functions $p_X, p_Y$. Let the distribution of the sum of $n$ independent variables $X_i$ and $Y_i$ be denoted by $P_{\sum_n X}$ and $P_{\sum_n Y}$ respectively, and
    $P_X^*=\mathcal{N}\left(0, \Var{X}\right)$, $P_Y^*=\mathcal{N}\left(0, \Var{Y}\right)$, then:
    \begin{equation}
    	\D(P_{\sum_n X}||P_{\sum_n Y}) = \D(P_{X}^*||P_{Y}^*) + O(n^{-1}).
    \end{equation}
\end{theorem}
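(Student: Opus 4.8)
The plan is to derive the statement from a two-term Edgeworth expansion of the densities of the normalized sums, after which a parity argument eliminates the would-be $n^{-1/2}$ correction to the divergence, leaving the Gaussian limit plus an $O(n^{-1})$ remainder. As a first step I would use that the Kullback--Leibler divergence is invariant under applying one common invertible affine map to both of its arguments. Since $\sum_{i=1}^n X_i$ and $\sum_{i=1}^n Y_i$ share the mean $n\mu$, with $\mu=\Eb{X_1}=\Eb{Y_1}$, I replace each $X_i$ by $X_i-\mu$ and each $Y_i$ by $Y_i-\mu$ (reducing to the zero-mean case) and then rescale by $n^{-1/2}$, so that it suffices to estimate $\D(P_{S^X_n}\|P_{S^Y_n})$, where $S^X_n=n^{-1/2}\sum_{i=1}^n X_i$ has the fixed variance $\sigma_X^2=\Var{X}$ and $S^Y_n$ is defined analogously with variance $\sigma_Y^2=\Var{Y}$; write $\phi_\sigma$ for the $\mathcal N(0,\sigma^2)$ density. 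The inclusion $\text{support}(X_i)\subseteq\text{support}(Y_i)$ guarantees the divergence is finite and $p_{S^X_n}/p_{S^Y_n}$ is well defined almost everywhere.

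The assumption that $X_1$ (resp.\ $Y_1$) has a bounded density is exactly what makes the Edgeworth machinery run: having a density gives Cram\'er's smoothness condition, and by Plancherel a bounded density has a square-integrable characteristic function, so the characteristic function of a sum of two i.i.d.\ copies is integrable, yielding a bounded continuous density and a valid local limit theorem. Combined with the finite fourth moments and the $C^1$ smoothness, this produces, uniformly in $x$,
\begin{equation}
    p_{S^X_n}(x)=\phi_{\sigma_X}(x)\Bigl(1+\tfrac{1}{\sqrt n}\,a_X(x)+r^X_n(x)\Bigr),\qquad \sup_x (1+x^2)\,\bigl|r^X_n(x)\bigr|=O(n^{-1}),
\end{equation}
where $a_X(x)=\tfrac{1}{6}\lambda^X_3\,H_3(x/\sigma_X)$, $\lambda^X_3$ is the standardized third cumulant of $X_1$, and $H_3$ is the (probabilists') Hermite polynomial of degree $3$; likewise for $Y$. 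The key structural fact to record is that $a_X$ and $a_Y$ are \emph{odd} functions of $x$.

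Substituting into $\D(P_{S^X_n}\|P_{S^Y_n})=\int p_{S^X_n}\log(p_{S^X_n}/p_{S^Y_n})$ and expanding the logarithm of the ratio of the two bracketed correction factors via $\log(1+u)=u-\tfrac12u^2+O(u^3)$ gives
\begin{align}
    \D(P_{S^X_n}\|P_{S^Y_n})
    &=\int \phi_{\sigma_X}\log\frac{\phi_{\sigma_X}}{\phi_{\sigma_Y}}
    +\frac{1}{\sqrt n}\int \phi_{\sigma_X}\Bigl[(a_X-a_Y)+a_X\log\frac{\phi_{\sigma_X}}{\phi_{\sigma_Y}}\Bigr]+O(n^{-1}).
\end{align}
The leading term is exactly $\D\bigl(\mathcal N(0,\sigma_X^2)\,\|\,\mathcal N(0,\sigma_Y^2)\bigr)=\D(P^*_X\|P^*_Y)$. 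For the $n^{-1/2}$ term, note that $\phi_{\sigma_X}$ is even and $\log(\phi_{\sigma_X}/\phi_{\sigma_Y})(x)=\tfrac12\log(\sigma_Y^2/\sigma_X^2)+\tfrac{x^2}{2}(\sigma_Y^{-2}-\sigma_X^{-2})$ is even, whereas $a_X$ and $a_Y$ are odd; hence every summand in the bracket, once multiplied by $\phi_{\sigma_X}$, is an odd integrand and integrates to $0$ over $\R$. Therefore the $n^{-1/2}$ contribution vanishes identically, and undoing the affine reduction yields $\D(P_{\sum_n X}\|P_{\sum_n Y})=\D(P^*_X\|P^*_Y)+O(n^{-1})$.

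The delicate part is the uniform remainder control and the tail analysis. The Edgeworth expansion is accurate only on a bulk region, say $|x|=O(\sqrt{\log n})$, so the contribution of the tails to the divergence integral must be bounded separately, using the finite fourth moments to dominate the $(1+x^2)$-weighted mass and the support inclusion to compare the two tail densities and keep $\log(p_{S^X_n}/p_{S^Y_n})$ integrable there. One must also verify that the bracketed correction factors stay bounded away from $0$ and $\infty$ on the bulk so the Taylor expansion of the logarithm is legitimate with a controlled $O(n^{-1})$ error, and that the $n^{-1}$-order integrals appearing in the expansion are finite (again from the bounded fourth moments and the Gaussian-type decay of the Edgeworth terms). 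These are the standard but somewhat technical ingredients behind Theorem~2 of \cite{moulin_kullback-leibler_2014}.
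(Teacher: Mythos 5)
This statement is imported verbatim from the cited reference (Theorem~2 of \cite{moulin_kullback-leibler_2014}); the paper offers no proof of its own, so your attempt can only be judged against the standard argument. Your outline is in fact the right one and matches the technique of the cited source: reduce by a common affine map to normalized, centered sums; insert a two-term Edgeworth expansion of each density; observe that the leading correction $n^{-1/2}\lambda_3 H_3(\cdot)/6$ is an odd function while $\phi_{\sigma_X}$ and $\log(\phi_{\sigma_X}/\phi_{\sigma_Y})$ are even, so the $n^{-1/2}$ contribution to the divergence integrates to zero. The use of boundedness of the density (via $p\in L^1\cap L^\infty\Rightarrow p\in L^2\Rightarrow\hat p\in L^2\Rightarrow\widehat{p^{*2}}\in L^1$) to secure the smoothness condition needed for a density-level Edgeworth expansion is also correct.

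The genuine gap is in the part you label ``delicate'' and then defer. The divergence is an integral of $p_{S_n^X}\log(p_{S_n^X}/p_{S_n^Y})$ over all of $\R$, and the Edgeworth factorization $\phi_\sigma(x)(1+n^{-1/2}a(x)+r_n(x))$ is useless once $n^{-1/2}a(x)\sim n^{-1/2}x^3$ is of order one, i.e.\ outside $|x|\lesssim n^{1/6}$; there the bracket can even be negative, and the Taylor expansion of the logarithm is not merely ``less accurate'' but undefined. Closing this requires (i) a pointwise lower bound on $p_{S_n^Y}$ on the region where $p_{S_n^X}$ carries mass, so that $\log(p_{S_n^X}/p_{S_n^Y})$ grows at most polynomially there, and (ii) a fourth-moment-weighted tail estimate showing the residual region contributes $O(n^{-1})$. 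The hypothesis $\mathrm{support}(X_i)\subseteq\mathrm{support}(Y_i)$ does not by itself yield (i) --- the ratio of the two densities can be unbounded on a common support --- and establishing such lower bounds for convolution powers is the technical core of results of this type, not a routine appendix. As a reconstruction of a known theorem your skeleton is correct, but as a proof it stops exactly where the real work begins, so the $O(n^{-1})$ claim is asserted rather than established.
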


\begin{proposition} \label{proposition:kl_convergence}
    For $U_i\sim\unifint{-\sfrac{1}{2},\sfrac{1}{2}}, G_i\sim\mathcal{N}(0, I)$, and $Z\sim\text{Weibull}(\lambda,k)$ with $\lambda(k), s(k)$ defined according to equations (\ref{equation:s_for_weibull}), (\ref{equation:l_for_weibull}), we have:
    \begin{equation}
        \D(P_{\normsqr{sU+Z}}||P_{\normsqr{G}})=O(n^{-1}).
    \end{equation}
\end{proposition}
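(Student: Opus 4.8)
The plan is to reduce the claim to Theorem~\ref{theorem:kl_sum_convergence}, exploiting that $\normsqr{sU+Z}$ and $\normsqr{G}$ are by construction sums of $n$ i.i.d.\ scalar terms whose first two moments already agree. Write $W_i \triangleq (sU_i+Z_i)^2$ and $T_i \triangleq G_i^2$, so that $\normsqr{sU+Z} = \sum_{i=1}^n W_i$ and $\normsqr{G} = \sum_{i=1}^n T_i$. The moment matching of the previous section was arranged precisely so that $\Eb{W_i} = 1 = \Eb{T_i}$ and $\Varb{W_i} = 2 = \Varb{T_i}$; hence the Gaussian surrogates in Theorem~\ref{theorem:kl_sum_convergence} coincide, $P_W^* = P_T^* = \mathcal{N}(0,2)$, and $\D(P_W^*||P_T^*) = 0$. (Matching means is what makes the passage to zero-mean surrogates legitimate, matching variances is what makes this term vanish rather than tend to a positive constant, and only \emph{finiteness} of the fourth moments is genuinely invoked.) If the remaining hypotheses of the theorem applied directly to $W_i, T_i$ we would be done.

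The obstacle is the regularity requirement: Theorem~\ref{theorem:kl_sum_convergence} asks for bounded, continuously differentiable marginal densities, whereas $T_i \sim \chi^2_1$ has density proportional to $t^{-1/2}e^{-t/2}$, unbounded at the origin, and $W_i$ has the same $t^{-1/2}$ singularity there since $sU_i + Z_i$ has a density that does not vanish at $0$. I would remedy this by blocking: assuming first $m \mid n$ for a fixed integer $m$, set $\widetilde{W}_j \triangleq \sum_{i=(j-1)m+1}^{jm} W_i$ and $\widetilde{T}_j \triangleq \sum_{i=(j-1)m+1}^{jm} T_i$, so that $\normsqr{sU+Z} = \sum_j \widetilde{W}_j$ and $\normsqr{G} = \sum_j \widetilde{T}_j$ are sums of $n/m$ i.i.d.\ terms. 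For $m$ a sufficiently large constant, $\widetilde{T}_j \sim \chi^2_m$ has a bounded $C^1$ density, and the $m$-fold self-convolution defining the density of $\widetilde{W}_j$ is likewise bounded and $C^1$: near the origin the convolution of $m$ copies of a $t^{-1/2}$-type factor produces the harmless local behaviour $t^{m/2-1}$, while away from the origin it is a convolution of the bounded density of $(sU_1+Z_1)^2$ against smooth factors inherited from the Weibull tail, smoothing out the kinks of $p_{sU_1+Z_1}$ at $\pm\sfrac{s}{2}$. Blocking rescales moments by $m$: $\Eb{\widetilde{W}_j} = m = \Eb{\widetilde{T}_j}$, $\Varb{\widetilde{W}_j} = 2m = \Varb{\widetilde{T}_j}$, both have finite fourth moments ($U_i$ is bounded and all moments of the Weibull and chi-squared laws are finite), and $\text{support}(\widetilde{W}_j) \subseteq [0,\infty) = \text{support}(\widetilde{T}_j)$.

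With the hypotheses verified, applying Theorem~\ref{theorem:kl_sum_convergence} to the $n/m$ i.i.d.\ blocks yields
\begin{equation}
    \D(P_{\normsqr{sU+Z}}||P_{\normsqr{G}}) = \D\left(\mathcal{N}(0,2m)||\mathcal{N}(0,2m)\right) + O\left((n/m)^{-1}\right) = O(n^{-1}),
\end{equation}
since $m$ is fixed; the case $m \nmid n$ is handled by enlarging a single block to absorb the $r < m$ leftover terms (giving it size between $m$ and $2m-1$), which alters only one summand and leaves the conclusion unchanged. The only step demanding real care is the regularity check in the blocked construction — showing that a fixed number of self-convolutions of the density of $(sU_1+Z_1)^2$ is bounded and continuously differentiable despite the Weibull density possibly blowing up at the origin when the shape parameter $k<1$. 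The cleanest route I see is to start from $p_{sU_1+Z_1}(x) = \sfrac{1}{s}\left(F_Z(x+\sfrac{s}{2}) - F_Z(x-\sfrac{s}{2})\right)$, which is already bounded and Lipschitz, deduce that $p_{(sU_1+Z_1)^2}$ is bounded away from $0$ and $O(t^{-1/2})$ near it, and then check that convolving $m$ such densities (say $m=6$) yields the $t^{m/2-1}$ behaviour at the origin and removes the remaining non-smoothness.
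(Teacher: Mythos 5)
Your proposal follows essentially the same route as the paper: moment matching kills the leading KL term, and the regularity hypotheses of Theorem~\ref{theorem:kl_sum_convergence} are recovered by blocking the scalar summands into groups of a fixed size (the paper uses $5$, the minimum for which the $\chi^2$ block density is $C^1$) before applying the theorem, with the convolution-smoothing argument for the blocks of $(sU_i+Z_i)^2$ carried out in the paper's appendix exactly along the lines you sketch. One small slip in your sketch: $p_{sU_1+Z_1}$ is bounded and continuous but \emph{not} Lipschitz when the Weibull shape parameter satisfies $k<1$ (its derivative inherits the $p_Z$ blow-up at $x=\pm\sfrac{s}{2}$), which is precisely why the paper's appendix must separately verify that the resulting $(x-\sfrac{s^2}{4})^{-1+k}$ singularity of $Df$ is still integrable.
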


\begin{proof}
    (Outline; details provided in the Appendix.) 
    Through the construction, $s$ and $Z$ were chosen such that $(sU+Z)_i^2$ and $G_i^2$ have equal mean and variance. Thus, $P^*_{(sU+Z)^2}=P^*_{G^2}$, and so $\D(P^*_{(sU+Z)^2} || P^*_{G^2})=0$.
    Let $n \equiv 0 \mod 5$ then summed groups of r.v.s $\sum_{j=1}^5 (sU+Z)_{5k+j}^2$ and $\sum_{j=1}^5 G_{5k+j}^2$ for group index $k\in\{0,\dots,\frac{n}{5}\}$ fulfill the assumptions of Theorem \ref{theorem:kl_sum_convergence}. This follows, $G_i^2$ being a Chi-squared r.v., density of which becomes continuously differentiable when summing more than $5$ variables.
    Applying the theorem yields
     \begin{align*}
        \D(P_{\normsqr{sU+Z}}||P_{\normsqr{G}}) &=\D(P_{\sum_n (sU+Z)_i^2}||P_{\sum_n G_i^2}) \\ &=O\left(\frac{5}{n}\right)=O(n^{-1}).
    \end{align*}
\end{proof}

As the densities of the r.v.s $R(sU+Z), \norm{sU+Z}, \normsqr{sU+Z}$ and $G, \norm{G}, \normsqr{G}$ respectively, differ only by an invertible transformation, their pairwise KL-divergences are the same:
\begin{align}    
    \D(P_{R(sU+Z)}||P_{G}) &= O(n^{-1})\\
    \D(P_{\norm{sU+Z}}||P_{\norm{G}}) &= O(n^{-1}) \\
    \D(P_{\normsqr{sU+Z}}||P_{\normsqr{G}})&=O(n^{-1}).
\end{align}

By the data processing inequality, the KL-divergence between one-dimensional marginal is also bounded by:
\begin{align}
	 \D(P_{(R(sU+Z))_i}||P_{G_i}) \leq \D(P_{R(sU+Z)}||P_{G}) = O(n^{-1}).
\end{align}

\section{Other lattices}
While we have focused on the integer lattice so far, other lattices can also be used with analogous analysis. Let $\lattice$ be an $m$-dimensional lattice.
Given dimension $n$ divisible by $m$, $\frac{n}{m}$ copies of $\lattice$ can be used for dithered quantization, with quantization error $V\in\R^n$, where $V_{(k-1)m+1:km
}\sim\unifset{\cell_\lattice}, k\in\{1, \dots, \frac{n}{m}\}$ is uniformly distributed over the Voronoi cell of $\lattice$. Combined with random rotation $R$, $RV$ is a rotationally invariant distribution as before. We have:
\begin{align}
    \normsqr{V}&=\sum_{k=1}^{n/m} \normsqr{V_{(k-1)m+1:km}}, 
\end{align}
which is a sum of $\frac{n}{m}$ independent identically distributed r.v.s, and as such, is approximately normal. The mean and variance of $\normsqr{V_{:m}}$ can be obtain from moments of $V$, where, for index vector $a\in\N^n$ the moments are of the form $\E \left[ v_1^{a_1} v_2^{a_2} \dots v_n^{a_n} \right]$:
\begin{align}
    \E \left[ \normsqr{V_{:m}} \right] &= \E \left[\sum_{i=1}^m v^2_i \right] = \sum_{i=1}^m \E \left[v^2_i \right] \\
    \E \left[ \norm{{V_{:m}}}^4 \right] &= \E \left[ \left(\sum_{i=1}^m v^2_i\right)^2 \right] \nonumber
    = \sum_{i=1}^m \E \left[ v^4_i \right] + 2 \sum_{i < j} \E \left[v^2_i v^2_j \right].
\end{align}
For a class of lattices called \emph{root lattices}, these moments can be exactly computed \cite{kohn_moment_2018}, based on dividing the Voronoi lattice into identical simplices called \emph{fundamental simplices} \cite{conway_voronoi_1993}. 
Table \ref{table:lattice_properties} shows $\Eb{\normsqr{V_{:m}}}$ and $ \Varb{\normsqr{V_{:m}}}$ computed for different root lattices. 
For the second moment of the $24$-dimensional leech lattice, we use the numerical result calculated in \cite{conway_voronoi_1984}. We can follow the same idea of random rotation $R$, scaling $s$ and addition of noise $Z$ to achieve desired Gaussian channel $\normsqr{R(sV+Z)}\stackrel{\text{approx.}}{\sim} \mathcal{N}(n, 2n)$.
We do not calculate what is an appropriate noise $Z$ for higher dimensional lattices but it can be done either analytically or through numerical methods.
The KL-analysis directly applies to this case as well.

\begin{table}[h]
    \caption{Properties of different lattices. $A_1=\sqrt{2}\Z$ is a scaled version of integer lattice, $A_2$ is the hexagonal lattice, and $\Lambda_{24}$ is the Leech lattice.}
    \centering
    \def\arraystretch{1.30}
    \begin{tabular}{c|c|c|c|c|r}
    Name &
    \begin{tabular}[x]{@{}c@{}}Dim.\\ $m$ \end{tabular} &
    \begin{tabular}[x]{@{}c@{}}Vol.\\ $\cell_\lattice$ \end{tabular} &
     $\E \normsqr{V_{:m}}$ & $\Var \normsqr{V_{:m}}$ &
    \begin{tabular}[x]{@{}l@{}}Excess\\ info. \\ (Prop. \ref{proposition:overhead_rotated_quantizer})  \\ $\sfrac{\text{bits}}{\text{dim.}}$ \end{tabular} 
    \\ \hline
    $A_1$ & $1$ & $\sqrt{2} $ & $\sfrac{1}{6}      $ & $\sfrac{1}{45}            $ & $0.25461$ \\ \hline
    $A_2$ & $2$ & $\sqrt{3} $ & $\sfrac{5}{36}     $ & $\sfrac{43}{3240}         $ & $0.22686$ \\ \hline
    $A_3$ & $3$ & $2        $ & $\sfrac{1}{8}      $ & $\sfrac{29}{2880}         $ & $0.21376$ \\ \hline
    $A_4$ & $4$ & $\sqrt{5} $ & $\sfrac{7}{60}     $ & $\sfrac{77}{9000}         $ & $0.20709$ \\ \hline
    $D_4$ & $4$ & $2        $ & $\sfrac{13}{120}   $ & $\sfrac{167}{25200}       $ & $0.19387$ \\ \hline
    $D_5$ & $5$ & $2        $ & $\sfrac{1}{10}     $ & $\sfrac{11}{2016}         $ & $0.18613$ \\ \hline
    $D_6$ & $6$ & $2        $ & $\sfrac{2}{21}     $ & $\sfrac{533}{105840}      $ & $0.18427$ \\ \hline
    $D_7$ & $7$ & $2        $ & $\sfrac{31}{336}   $ & $\sfrac{79}{16128}        $ & $0.18518$ \\ \hline
    $D_8$ & $8$ & $2        $ & $\sfrac{13}{144}   $ & $\sfrac{139}{28512}       $ & $0.18735$ \\ \hline
    $E_6$ & $6$ & $\sqrt{3} $ & $\sfrac{5}{56}     $ & $\sfrac{2497}{635040}     $ & $0.17230$ \\ \hline
    $E_7$ & $7$ & $\sqrt{2} $ & $\sfrac{163}{2016} $ & $\sfrac{1727}{580608}     $ & $0.16139$ \\ \hline
    $E_8$ & $8$ & $1        $ & $\sfrac{929}{12960}$ & $\sfrac{457579}{230947200}$ & $0.14597$ \\ \hline
    $\Lambda_{24}$ & $24$ & 1
    & \begin{tabular}[x]{@{}r@{}}$0.065771$\\ $\pm0.000074$ \end{tabular} 
    &
    & \begin{tabular}[x]{@{}r@{}}$0.08389$\\ $\pm0.00081$ \end{tabular} 
    \end{tabular}
    \label{table:lattice_properties}
\end{table}

\section{Excess information analysis}

Next, we would like to characterize the rate of communication required for the proposed channel simulation approach and compare it with the alternative layered quantization scheme. Given that $I(X;Y)$ is a lower bound, we compare the excess information each scheme needs to transmit beyond this lower bound.  

\begin{proposition} 
\label{proposition:overhead_layered_quantizer}
 The excess information of layered quantization in \cite{agustsson_universally_2020} that employs scale mixture of uniform distributions is bounded above by $0.521$ bits. 
\end{proposition}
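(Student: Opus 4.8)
The plan is to make the bit‑accounting explicit. Recall the scheme of \cite{agustsson_universally_2020}: from common randomness the encoder and decoder draw a scale $S$ together with a dither $U'$ that, conditioned on $S=s$, is uniform on $\unifint{-\sfrac{s}{2},\sfrac{s}{2}}$; the encoder transmits $K=\lfloor (X-U')/S\rceil$ and the decoder outputs $Y=SK+U'$. The law of $S$ is the one from the scale‑mixture‑of‑uniforms representation of $\mathcal{N}(0,1)$ \cite{CHOY2003227}, so that marginally $N\triangleq Y-X\sim\mathcal{N}(0,1)$; concretely $S/2$ has a chi distribution with three degrees of freedom, equivalently $(S/2)^2\sim\chi^2_3$. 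The quantity to bound is the transmitted information $\mathcal{R}=H(K\mid U',S)$ minus the fundamental limit $I(X;Y)$.

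The first step is the per‑scale rate. Conditioned on $S=s$ and on a value $u$ of the dither, $K$ is a deterministic function of $X$, and the bin probabilities satisfy $\pr{K=k\mid U'=u,S=s}=s\,p_Y(sk+u\mid S=s)$; averaging the resulting discrete entropy over $u\sim\unifint{-\sfrac{s}{2},\sfrac{s}{2}}$ folds the sum over bins back into an integral against $p_Y(\cdot\mid S=s)$ (the periodic fold‑back of the quantizer partition), giving the \emph{exact} identity $H(K\mid U',S=s)=h(Y\mid S=s)-\log_2 s=h(X+N_s)-h(N_s)$ with $N_s\sim\unifint{-\sfrac{s}{2},\sfrac{s}{2}}$; this is the classical dithered‑quantization rate formula \cite{ziv_universal_quantization,zamir_universal_1992}. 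Taking $\Eb{\cdot}$ over $S$ yields $\mathcal{R}=h(Y\mid S)-h(N\mid S)$.

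Since $N$ is independent of $X$, $I(X;Y)=h(Y)-h(N)$, so
\[
\mathcal{R}-I(X;Y)=\bigl(h(Y\mid S)-h(Y)\bigr)+\bigl(h(N)-h(N\mid S)\bigr)=I(N;S)-I(Y;S)\le I(N;S),
\]
where the inequality uses $I(Y;S)\ge 0$ and holds for every $P_X$ (it is tight in the limit of a flat $P_X$). It remains to evaluate $I(N;S)=h(N)-h(N\mid S)=\tfrac12\log_2(2\pi e)-\Eb{\log_2 S}$. Writing $S=2S'$ with $(S')^2\sim\chi^2_3=\mathrm{Gamma}(\sfrac32,2)$ gives $\Eb{\log_2 S}=1+\tfrac{1}{2\ln 2}\bigl(\psi(\sfrac32)+\ln 2\bigr)=1+\tfrac{1}{2\ln 2}\bigl(2-\gamma-\ln 2\bigr)$ with $\gamma$ the Euler--Mascheroni constant; substituting, $I(N;S)=\tfrac12\log_2(2\pi e)-1-\tfrac{2-\gamma-\ln 2}{2\ln 2}\approx 0.5208<0.521$, which is the claim.

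The main obstacle is the exact per‑scale identity $H(K\mid U',S=s)=h(Y\mid S=s)-\log_2 s$: the cancellation that turns the discrete entropy of the quantization index into the differential entropy of $Y$ relies on averaging over the \emph{subtractive} dither and on the periodicity of the partition, and a little care is needed because $Y$ is absolutely continuous marginally but lattice‑supported once $(U',S)$ is fixed (one should also note $h(Y\mid S=s)\ge\log_2 s>-\infty$ since $p_Y(\cdot\mid S=s)\le 1/s$, so the manipulations are legitimate whenever $I(X;Y)<\infty$). Everything after that step is bookkeeping with mutual informations plus the closed‑form digamma evaluation of $\Eb{\log_2 S}$.
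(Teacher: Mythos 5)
Your proof is correct and follows essentially the same route as the paper's: both rest on the Zamir--Feder dithered-quantization identity $H(K\mid U',S)=I(X;Y\mid S)=h(Y\mid S)-h(N\mid S)$, both then discard the same nonnegative term (your $I(Y;S)\ge 0$ is exactly the paper's step $h(Y\mid S)\le h(Y)$), and both evaluate the remaining gap $h(N)-\Eb{\log S}$ via the digamma function to get $\approx 0.5208$. Your phrasing of the slack as $I(N;S)-I(Y;S)$ is a slightly more transparent bookkeeping of the same inequality, but it is not a different argument.
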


\begin{proof}
Let $X$ be a scalar r.v., $U', U \sim \unifint{-\sfrac{1}{2},\sfrac{1}{2}}$, and $S = 2\sigma\sqrt{\Gamma}$ where $\Gamma \sim \text{Gamma}(\sfrac{3}{2}, \sfrac{1}{2})$.
Further, let $K = \lfloor X/S - U'\rceil$.
Then
\begin{align}
    Y& = K S + U' \sim X + SU \sim X + \mathcal{N}(0, \sigma^2).
\end{align}
    We have
    \begin{align}
      H(K \mid U', S)
      &= I(X; Y \mid S) \label{eqn:effectivnes_of_dithering} \\
      &= h(Y \mid S) - h(Y \mid X, S) \\
      &\leq h(Y) - h(Y \mid X, S) \\
      &= I(X; Y) + h(Y \mid X) - \Eb{\log S} \\
      &= I(X; Y) + \frac{1}{2} \log \pi + \frac{1 - \psi(\frac{3}{2})}{2 \ln 2} - 1 \label{e:Gasussian} \\
      &\leq I(X; Y) + 0.521,
    \end{align}
    where (\ref{eqn:effectivnes_of_dithering}) follows Theorem 1 in \cite{zamir_universal_1992},
    and (\ref{e:Gasussian}) follows from the fact that  $SU' \sim \mathcal{N}(0, \sigma^2)$ as shown in \cite{qin2003scalemixture}; and $\psi(\cdot)$ is the digamma function.
\end{proof}

This result is surprising considering that much more involved schemes such as adaptive greedy rejection sampling \cite{flamich2023agrs} result in an excess information of nearly $2$ bits for Gaussian sources. Li and El Gamal \cite{li_strong_2018} provide a lower bound on the excess information for discrete channels.
In Figure~\ref{fig:gaussian}, we numerically evaluate the lower bound for a
Gaussian source and a finely discretized one-dimensional Gaussian channel, and various choices of $\sigma$. 
We find that when $I(X; Y) > 1$, the empirical excess information lower bound is close to $0.5$ bits. That is, the layered quantization is close to optimal for a one-dimensional Gaussian source.

\begin{proposition} \label{proposition:overhead_rotated_quantizer}
Let $X$ be a scalar r.v., and $V',V\sim\unifset{\cell_\lattice}$, where $\cell_\lattice$ is the Voronoi cell of lattice $\Lambda$ used for quantization. Let $K$ denote the index of the quantized lattice bin: 
\begin{align}
    K &= q_\lattice(R^\top S^{-1} X - V'), \mbox{~ and }\\
    Y &= R(S(K+V')+Z)=X+R(SV+Z).
\end{align}
Then, the excess information for the rotated dithered quantization scheme is bounded by
\begin{equation}
    H(K|V',R,S,Z) \leq I(X;Y) + h(G) - h(V) - n \Eb{\log S},
\end{equation}
where $G\sim\mathcal{N}(0, \Var(SV+Z))$.
\end{proposition}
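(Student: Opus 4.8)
\emph{Proof plan.}
I would follow the same skeleton as the proof of Proposition~\ref{proposition:overhead_layered_quantizer}, but carry it out conditionally on the common randomness $(R,S,Z)$ and in $n$ dimensions. The first step is to trade the coding cost for a conditional mutual information. For every fixed realization $(R,S,Z)=(r,s,z)$, the index $K=q_\lattice(r^\top s^{-1}X-V')$ is exactly the subtractive dithered-quantization index of the affinely transformed source $r^\top s^{-1}X$, and $Y$ is an invertible function of $q_\lattice(r^\top s^{-1}X-V')+V'$; hence the dithered-quantization identity (Theorem~1 in \cite{zamir_universal_1992}) applies per realization, and after averaging it gives
\[
    H(K\mid V',R,S,Z)=I(X;Y\mid R,S,Z)=h(Y\mid R,S,Z)-h(Y\mid X,R,S,Z).
\]
Since conditioning cannot increase differential entropy, $h(Y\mid R,S,Z)\le h(Y)$, and I would then expand $h(Y)=I(X;Y)+h(Y\mid X)$.

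The second step is to evaluate the two residual entropies directly. For $h(Y\mid X,R,S,Z)$: conditioned on $X=x$, $R=r$, $S=s$, $Z=z$, the dither lemma gives $q_\lattice(r^\top s^{-1}x-V')+V'\disteq r^\top s^{-1}x+V$ with $V\sim\unifset{\cell_\lattice}$ independent of $x$, so $Y\disteq x+rsV+rz$ is an affine image of $V$ with Jacobian determinant $s^n$ (using $\det r=1$), whence $h(Y\mid X,R,S,Z)=h(V)+n\,\Eb{\log S}$. For $h(Y\mid X)$: since $R,S,V',Z$ are produced by the dither/common randomness independently of the source, $Y\mid X=x\disteq x+R(SV+Z)$ for every $x$, so $h(Y\mid X)=h(R(SV+Z))$; by Proposition~\ref{proposition:R_makes_rot_invariant} this density is rotationally invariant and zero-mean, so its covariance is a scalar multiple of $I_n$ equal to $\mathrm{Var}(SV+Z)$, and the maximum-entropy property of the Gaussian yields $h(R(SV+Z))\le h(G)$ with $G\sim\mathcal{N}(0,\mathrm{Var}(SV+Z))$.

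Assembling the pieces,
\[
    H(K\mid V',R,S,Z)\le I(X;Y)+h(Y\mid X)-h(Y\mid X,R,S,Z)\le I(X;Y)+h(G)-h(V)-n\,\Eb{\log S},
\]
which is the claimed bound.

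The main obstacle is the second step, and specifically checking that the subtractive-dither lemma survives the conditioning on $(R,S,Z)$: one must verify that $V=q_\lattice(R^\top S^{-1}X-V')+V'-R^\top S^{-1}X$ is uniform on $\cell_\lattice$ and statistically independent of $X$ for \emph{every} realization of $(R,S,Z)$, not just marginally, and that the change of variables from $V$ to $Y$ contributes exactly $n\log S$ while $RZ$ enters only as a translation. A secondary point requiring care is the interpretation of $\mathrm{Var}(SV+Z)$: Proposition~\ref{proposition:R_makes_rot_invariant} is what guarantees that $R(SV+Z)$ has an isotropic covariance, so that the Gaussian entropy-maximization step introduces no extra slack from non-isotropy; and when $S$ is random one also needs $\Eb{\log S}$ to be finite.
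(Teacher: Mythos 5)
Your proposal is correct and follows essentially the same route as the paper: the paper's (very terse) proof is exactly the chain $H(K\mid V',R,S,Z)=I(X;Y\mid R,S,Z)\leq I(X;Y)+h(Y\mid X)-h(SV\mid S)$, with $h(SV\mid S)=h(V)+n\Eb{\log S}$ and $h(Y\mid X)\leq h(G)$ by the Gaussian maximum-entropy property, all inherited from the proof of Proposition~\ref{proposition:overhead_layered_quantizer}. Your write-up simply makes explicit the per-realization application of the dither lemma, the Jacobian computation, and the isotropy of the covariance of $R(SV+Z)$ — details the paper leaves implicit.
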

\begin{proof} Follows similarly to the proof of Prop. \ref{proposition:overhead_layered_quantizer}.
\begin{align}
    H(K|V',R,S,Z) 
    &\leq I(X;Y) + h(Y|X) - h(SV|S) \\
    &\leq I(X;Y) + h(G) - h(V) - n \Eb{\log S}.
\end{align}
\end{proof}

The values of the upper bound from Proposition \ref{proposition:overhead_rotated_quantizer} are presented in Table \ref{table:lattice_properties}, where the scale $S$ is taken to be deterministic and follow $\pr{S=\left(\E\normsqr{V_{:m}}\right)^{-1/2}}=1$. 
With the same integer lattice as in the layered quantization, we can see that the bound on the excess information reduces to $0.255$ bits per dimension, less than half that of layered quantization. This is because it uses a randomized scale $S$, and as such the convex function $-\Eb{\log S}$ term is necessarily greater than that of a fixed $\pr{S=s}=1$, given we know the upper bound for $\Eb{S}$.

We can also observe that the lower bound on the excess information further diminishes when higher dimensional lattices are employed. It reduces all the way down to $0.084$ bits per dimension for the Leech lattice, which is a more than six-fold reduction, showing the benefits of quantization with higher-dimensional lattices.

\begin{figure}
\centering
\includegraphics[width=0.5\columnwidth]{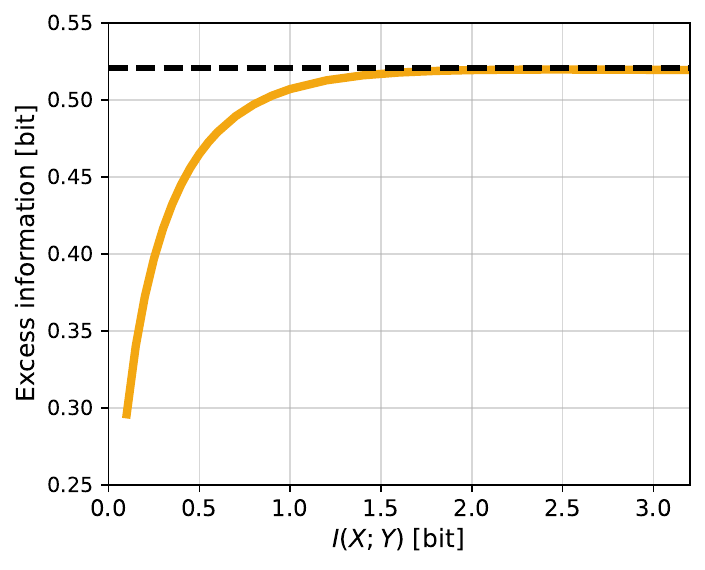}
\caption{
  The dashed line indicates the $0.521$ excess information in the coding cost of the layered
  quantization scheme. The orange line corresponds to a numerical estimate of
  a lower bound on the excess information for a Gaussian source $X \sim
  \mathcal{N}(0, 1)$, a channel $Y \sim \mathcal{N}(X, \sigma^2)$, and various choices of $\sigma$ which translate to different values of $I(X; Y)$ on the x-axis. 
}
\label{fig:gaussian}
\end{figure}

\section{Conclusion}

We proposed rotated dithered quantization as an efficient approach to approximate simulation of multi-variate Gaussian channels from both computation and communication efficiency, leveraging $n$-dimensional dithered quantization. 
Left for further research is the characterisation of noise required for rotated dithering with higher-dimensional lattices, as well as the potential benefits of integrating randomized scaling into the method.

\newpage

\bibliographystyle{ieeetr}
\bibliography{main.bib}

\clearpage
\onecolumn

\appendix
In this appendix, we provide more details of the proof of Proposition \ref{proposition:kl_convergence}, which shows the $\ell_2$ norm of a $n$-dimensional Gaussian noise $G$ and rotated dithered quantization noise $(sU+Z)$ converge in KL-divergence on the order of:
\begin{equation}
    \D(P_{\normsqr{sU+Z}}||P_{\normsqr{G}})=O(n^{-1}).
\end{equation}
It follows the application of Theorem \ref{theorem:kl_sum_convergence} which requires the densities of the random variables to be bounded, and continuously differentiable. 
The theorem, can not be applied directly to $G^2_i$ and $R(sU+Z)^2_i$ since their densities are unbounded. Instead, we apply the theorem to sums of 5 random variables where we interpret the sum of $n$ random variables as a sum of $n/5$ random variables:
\begin{align}
    \normsqr{G} &= \sum_{i=1}^n G_i^2 = \sum_{i=1}^{n/5} \sum_{j=1}^5 G_{5(i-1)+j}^2 \\
    \normsqr{R(sU+Z)} = \normsqr{sU+Z} &= \sum_{i=1}^n (sU+Z)_i^2 = \sum_{i=1}^{n/5} \sum_{j=1}^5 (sU+Z)_{5(i-1)+j^2}.
\end{align}

\begin{proposition} \label{propositon:G5_is_cont_diff}
    Let $G_i\sim\mathcal{N}(0,1)$ for $i\in\{1,\dots,5\}$ be independently distributed, then the probability density function of $\sum_{i=1}^{5} G_i^2$ is bounded and continuously differentiable.
\end{proposition}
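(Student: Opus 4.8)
The plan is to identify $W \triangleq \sum_{i=1}^5 G_i^2$ as a chi-squared random variable with $5$ degrees of freedom and work directly with its density. Since each $G_i^2$ has density proportional to $x^{-1/2}e^{-x/2}$ on $(0,\infty)$, convolving five such densities (or simply quoting the $\chi^2_5$ density) gives
\begin{equation}
    p_W(x) = \begin{cases} c\,x^{3/2}e^{-x/2}, & x>0,\\[2pt] 0, & x\le 0,\end{cases}
    \qquad c = \frac{1}{2^{5/2}\Gm{5/2}}.
\end{equation}
The key structural fact I would exploit is that the power of $x$ multiplying the exponential is $5/2 - 1 = 3/2 > 1$.

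First I would prove boundedness: on $(0,\infty)$ the map $x\mapsto x^{3/2}e^{-x/2}$ is continuous, vanishes as $x\to 0^+$ and as $x\to\infty$, and has a single interior critical point at $x=3$, so it is bounded by $3^{3/2}e^{-3/2}$; hence $p_W \le c\,3^{3/2}e^{-3/2} < \infty$. Next I would establish that $p_W \in C^1(\R)$. On $(-\infty,0)$ and on $(0,\infty)$ the density is smooth (constant zero, resp.\ a product of a power and an exponential), so only the point $x=0$ needs checking. Continuity there is immediate since $\lim_{x\to0^+}c\,x^{3/2}e^{-x/2}=0=p_W(0)$. For differentiability at $0$, the left difference quotient is $0$ and the right one is $\lim_{h\to0^+}h^{-1}c\,h^{3/2}e^{-h/2}=c\lim_{h\to0^+}h^{1/2}e^{-h/2}=0$, so $p_W'(0)=0$. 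Finally, for $x>0$ one has $p_W'(x)=c\!\left(\tfrac32 x^{1/2}-\tfrac12 x^{3/2}\right)e^{-x/2}\to 0$ as $x\to0^+$, which agrees with the value $0$ that $p_W'$ takes on $(-\infty,0]$; hence $p_W'$ is continuous on all of $\R$.

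I do not anticipate a genuine obstacle here — the only delicate point is the boundary behaviour at the origin, and the exponent $3/2>1$ is exactly what forces both $p_W(0^+)=0$ and $p_W'(0^+)=0$, giving the required regularity. This is also the reason the proof of Proposition~\ref{proposition:kl_convergence} groups the summands into blocks of five rather than fewer: with $1$ degree of freedom the density is unbounded, with $2$ it is discontinuous at the origin, and with $3$ or $4$ it fails to be $C^1$ there (the derivative blows up, resp.\ has a jump), so five is the smallest block size that makes Theorem~\ref{theorem:kl_sum_convergence} applicable. If sharper statements were ever wanted, the same computation shows $p_W$ is $C^1$ but not $C^2$, though only $C^1$ is needed.
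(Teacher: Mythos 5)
Your proof takes exactly the same route as the paper: identify the sum as a $\chi^2_5$ random variable, write its density $c\,x^{3/2}e^{-x/2}\mathbf{1}_{x>0}$, and observe that the exponent $3/2>1$ gives boundedness and $C^1$ regularity at the origin. You are more careful than the paper (which simply asserts the conclusion after writing the density), explicitly checking the one-sided difference quotients at $x=0$, and your normalizing constant $c=1/(2^{5/2}\Gm{5/2})$ is the correct one, whereas the paper's stated constant $2^{5/2}/\Gm{5/2}$ is a typo.
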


\begin{proof}
  The random variable $\sum_{i=1}^{5} G_i^2$ is chi-squared distributed with density function:
    \begin{equation}
        p_{\sum_{i=1}^5 G_i^2}(x) = 
        \begin{cases}
            0 & \text{if } x < 0, \\
            \frac{2^{5/2}}{\Gamma(5/2)} x^{3/2}e^{-x/2} & \text{if } x \geq 0,
        \end{cases}
    \end{equation}
    which is a bounded and continuously differentiable.
\end{proof}

\begin{table}[b]
\centering
\def\arraystretch{1.20}
\caption{Probability density functions of sums of squared normal random variables and their properties. $C_i$ denote different constants.}
    \begin{tabular}{c|l|c|c|c|l}
    $N$ &
      \begin{tabular}[c]{@{}l@{}}density of $\sum_{i=1}^N G_i^2$\\ for $x>0$\end{tabular} &
      bound. &
      cont. &
      \begin{tabular}[c]{@{}c@{}}cont.\\ diff.\end{tabular} &
      comment \\ \hline
    $1$ & $C_1 e^{-x/2} x^{-1/2}$ & - & - & - & $\lim_{x \to 0^+} p \to \infty$             \\ \hline
    $2$ & $C_2 e^{-x/2}$          & + & - & - & jump discontinuity at $x=0$                 \\ \hline
    $3$ & $C_3 e^{-x/2} x^{1/2}$  & + & + & - & $\lim_{x \to 0^+} \frac{dp}{dx} \to \infty$ \\ \hline
    $4$ & $C_4 e^{-x/2} x$        & + & + & - & jump discontinuity of derivative at $x=0$   \\ \hline
    $5$ & $C_5 e^{-x/2} x^{3/2}$  & + & + & + &
    \end{tabular}
\label{table:G_convolutions}
\end{table}

To motivate better, the choice of considering the random variables in the groups of $5$, we show the properties of sums of $G_i$ in table \ref{table:G_convolutions}. We can see that the more terms we sum, the more 'regular' the probability density function. For $N$ terms the density is proportional to $x^{N/2-1}$ for $x>0$ and $0$ otherwise, which becomes continuous only when the exponent of the polynomial is positive i.e., $N\geq3$. Similarly, the derivative of the density is proportional to $x^{N/2-3}$ which is continuous for $N\geq5$.

\begin{proposition} \label{propositon:f5_is_cont_diff}
    Let $s>0, U_i\sim\unifint{-\sfrac{1}{2}, \sfrac{1}{2}}$ and $Z_i$ be Weibull distributed with parameters $\lambda>0, k>0$ for $i\in\{1,\dots,5\}$ be independently distributed. The probability density function of $\sum_{i=1}^{5} (sU+Z)_i^2$ is bounded, continuously differentiable.
\end{proposition}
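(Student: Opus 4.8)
The plan is to show that the density of the single-variable r.v. $W_i \triangleq (sU_i + Z_i)^2$ is already reasonably regular, and then argue that convolving five independent copies of $P_{W}$ cannot destroy boundedness and must in fact produce a $C^1$ density. First I would analyze $P_W$ directly. Since $U_i$ has a bounded density on a compact interval and $Z_i$ has the Weibull density (which is bounded for $k \ge 1$ and, for $0 < k < 1$, blows up only like $t^{k-1}$ as $t \to 0^+$, an integrable singularity), the sum $sU_i + Z_i$ has a bounded density when $k \ge 1$ and an $L^p$ density with an integrable singularity at worst otherwise; in either case $sU_i+Z_i$ has a density in $L^1 \cap L^{1+\varepsilon}$ for some $\varepsilon>0$, with support contained in a bounded interval. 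Pushing forward through $x \mapsto x^2$ introduces at most a $|w|^{-1/2}$ factor near $w = 0$, so $P_W$ has a density $p_W$ that is integrable, supported in a bounded interval $[0, M]$, and lies in $L^{1+\delta}$ for some small $\delta > 0$ (the $|w|^{-1/2}$ and the Weibull $t^{k-1}$ singularities are each in $L^{1+\delta}$ for $\delta$ small enough).

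Next I would use the smoothing effect of convolution. The density of $\sum_{i=1}^5 W_i$ is the fivefold convolution $p_W^{*5}$. By iterated use of Young's inequality, convolving an $L^{1+\delta}$ function with itself raises integrability: two convolutions already give a bounded (indeed continuous) function once $p_W \in L^2$, and if $p_W$ is only in $L^{1+\delta}$ one needs a few more convolutions — five is comfortably enough to land in $L^\infty$ and in fact in $C_0$. For the continuous differentiability I would compare against the Gaussian-squared case treated in Proposition \ref{propositon:G5_is_cont_diff}: write $sU_i + Z_i$ (or a suitable auxiliary decomposition) so that the convolution $p_W^{*5}$ can be expressed as the convolution of a kernel that is at least as smooth as $x^{3/2}e^{-x/2}$ near the origin with a bounded, compactly supported measure; the worst-case local behaviour of $p_W^{*5}(x)$ near $x = 0$ is governed by the "slowest" of the five factors, and after five self-convolutions the accumulated exponent on $x$ is at least $5(\min(k,1/2)) - 1$ plus the polynomial growth from the uniform parts, which exceeds $1$ for every admissible $k > 0$ (exactly the same counting as in Table \ref{table:G_convolutions}, where $5 \cdot \tfrac12 - 1 = \tfrac32 > 1$). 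Hence $p_W^{*5}$ is $C^1$, with the derivative obtained by differentiating under the convolution integral, justified because $p_W^{*3}$ (say) is already bounded and continuous and $p_W^{*2}$ absolutely continuous with integrable derivative on compacts.

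The main obstacle I expect is the case $0 < k < 1$, where the Weibull density is unbounded at the origin: there one cannot simply say "$sU_i + Z_i$ has a bounded density," and the bookkeeping for how the $t^{k-1}$ singularity interacts with the $|w|^{-1/2}$ from squaring and then with four further convolutions needs care. The cleanest way around it is to fix the smallest admissible $\delta$ with $p_W \in L^{1+\delta}$, track the exponent through each Young-inequality step, and separately verify — as a local computation near $w=0$ only, since away from $0$ everything is bounded and smooth — that the fivefold convolution has a continuous derivative; this reduces to checking that an integral of the form $\int_0^x \tau^{a}(x-\tau)^{b}\,d\tau$ with $a, b > -1$ and $a + b + 1 > 1$ is $C^1$ in $x$, which it is. All remaining pieces (exponential tails from the Weibull, compact support of the uniform parts guaranteeing integrability at infinity) are routine.
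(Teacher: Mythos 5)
Your outline gets the boundedness/continuity half essentially right and in the same spirit as the paper (control of the single-factor density, then Young's inequality and convolution smoothing), but the continuous-differentiability half has a genuine gap, and your singularity bookkeeping is mislocated. First, the Weibull singularity $t^{k-1}$ (for $k<1$) does not survive into the density of $sU_i+Z_i$ at all: that density equals $\tfrac{1}{s}\bigl(F_Z(x+\tfrac{s}{2})-F_Z(x-\tfrac{s}{2})\bigr)$, a difference of CDF values, hence bounded by $1/s$, continuous, and tending to a nonzero constant as $x\to 0^+$. Consequently the density of $W=(sU+Z)^2$ behaves like $Cx^{-1/2}$ near $0$ for \emph{every} $k>0$ (never better), so your exponent count $5\min(k,\sfrac{1}{2})-1$ is wrong twice over: the correct per-factor exponent at the origin is $-\sfrac{1}{2}$ independently of $k$, and your own formula is negative for small $k$, contradicting the claim that it ``exceeds $1$ for every admissible $k$''. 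The $k<1$ singularity instead resurfaces in the \emph{derivative} of $p_W$ at the interior point $x=\sfrac{s^2}{4}$ (where $\sqrt{x}-\sfrac{s}{2}=0$), in the integrable form $(x-\sfrac{s^2}{4})^{k-1}$; that interior point, not the origin, is where the ``exponent $>-1$'' check is needed.

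Second, and more seriously, the justification for differentiating under the convolution is missing. Since $p_W\sim Cx^{-1/2}$ at $0$, $Dp_W$ has a non-integrable $x^{-3/2}$ singularity, so $D(g*p_W)=g*(Dp_W)$ is not available; and your fallback --- ``$p_W^{*2}$ absolutely continuous with integrable derivative on compacts'' --- is false: $p_W^{*2}$ has a jump discontinuity at $0$ (exactly as $\chi^2_2$ in Table \ref{table:G_convolutions}), so its distributional derivative contains a point mass. Your local model $\int_0^x\tau^a(x-\tau)^b\,d\tau$ with $a=b=-\sfrac{1}{2}$ gives $a+b+1=0$, which fails your own criterion $a+b+1>1$: the pairwise self-convolution of the singular parts is a constant multiple of the unit step, not a $C^1$ function. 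The paper closes exactly this gap by writing $p_W=\fnice+Cu(x)x^{-1/2}$ with $D\fnice\in L^1$ ($u$ the unit step), and using the identity $\left(u(x)x^{-1/2}\right)^{*2}=\pi u(x)$ so that the only non-function contribution to $D(p_W^{*5})$ is a delta convolved with the already bounded and continuous $p_W^{*3}$. Some such explicit isolation of the $x^{-1/2}$ parts (or an equivalent device) is indispensable; without it your argument does not establish the $C^1$ claim.
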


The density of the sum of independent random variables is the convolution of their respective densities, thus first we introduce a few of their properties. 
\begin{definition} \label{definition:convolution}
    For functions $g,h: \R \to\R$, let their convolution be:
    \begin{align}
        (g*h)(x) = \int^\infty_{-\infty} g(t) h(x-t) dt.
    \end{align}
\end{definition}
Let $L^p, p>0$ denote a space of measurable functions such that $g\in L^p \Rightarrow \left(\int |g|^p d\mu\right)^{\sfrac{1}{p}} < \infty$.
\begin{theorem}\label{theorem:youngs_convolution} [Young's convolution inequality]
    Let $g\in L^p, h\in L^q$ and $\frac{1}{p}+\frac{1}{q}=\frac{1}{r}+1$ then $g*p\in L^r$.
\end{theorem}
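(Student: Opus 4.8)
The plan is to prove the quantitative form of the statement, $\lVert g*h\rVert_r \le \lVert g\rVert_p\,\lVert h\rVert_q$ (where $\lVert\cdot\rVert_p$ denotes the $L^p$-norm), from which the membership $g*h\in L^r$ is immediate. Since $|(g*h)(x)|\le(|g|*|h|)(x)$ pointwise, I would first reduce to the case $g,h\ge 0$, and then dispose of the degenerate endpoints separately: when $r=\infty$ (so that $\tfrac1p+\tfrac1q=1$) the bound is just H\"older's inequality applied to $t\mapsto g(t)$ and $t\mapsto h(x-t)$ at each $x$; when $p=1$ (which forces $q=r$) or $q=1$ (which forces $p=r$) it is Minkowski's integral inequality. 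This leaves the generic case $1<p,q,r<\infty$, in which $q>1$ already forces $p<r$ (and symmetrically $q<r$), so the auxiliary exponents introduced below are all finite and $\ge 1$.

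The key step is a three-factor splitting of the integrand together with the generalized H\"older inequality. I would write
\begin{equation}
    g(t)\,h(x-t) = \bigl(g(t)^p\,h(x-t)^q\bigr)^{1/r}\cdot g(t)^{\,1-p/r}\cdot h(x-t)^{\,1-q/r}
\end{equation}
and apply H\"older in the variable $t$ with the three exponents $r$, $\tfrac{pr}{r-p}$ and $\tfrac{qr}{r-q}$. The hypothesis $\tfrac1p+\tfrac1q=1+\tfrac1r$ is exactly what makes $\tfrac1r+\bigl(\tfrac1p-\tfrac1r\bigr)+\bigl(\tfrac1q-\tfrac1r\bigr)=1$, so the inequality is legitimate; the last two factors integrate (in $t$) to the $x$-independent constants $\lVert g\rVert_p^{\,1-p/r}$ and $\lVert h\rVert_q^{\,1-q/r}$, which yields the pointwise bound $(g*h)(x)\le\bigl(\int g(t)^p h(x-t)^q\,dt\bigr)^{1/r}\lVert g\rVert_p^{\,1-p/r}\lVert h\rVert_q^{\,1-q/r}$ for almost every $x$.

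To finish, I would raise this bound to the $r$-th power, integrate it over $x$, and apply Tonelli's theorem together with translation invariance of Lebesgue measure to evaluate $\int\!\!\int g(t)^p h(x-t)^q\,dt\,dx=\lVert g\rVert_p^{\,p}\,\lVert h\rVert_q^{\,q}$; collecting exponents gives $\lVert g*h\rVert_r^{\,r}\le\lVert g\rVert_p^{\,r-p}\,\lVert h\rVert_q^{\,r-q}\cdot\lVert g\rVert_p^{\,p}\,\lVert h\rVert_q^{\,q}=\lVert g\rVert_p^{\,r}\,\lVert h\rVert_q^{\,r}$, and taking $r$-th roots closes the argument. An alternative route is Riesz--Thorin interpolation: for fixed $h\in L^q$ the map $g\mapsto g*h$ is bounded $L^1\to L^q$ and $L^{q'}\to L^\infty$ with operator norm at most $\lVert h\rVert_q$ in each case, and interpolating these two endpoints produces precisely the triples $(p,q,r)$ satisfying the stated relation. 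I expect the only real obstacle to be the bookkeeping: checking that the auxiliary exponents are admissible and that their reciprocals together with $1/r$ sum to $1$ exactly under $\tfrac1p+\tfrac1q=1+\tfrac1r$, plus the separate verification of the degenerate endpoints; once the exponents are pinned down, the rest is a direct application of H\"older's inequality and Tonelli's theorem.
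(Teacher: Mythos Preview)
Your proof is correct and is the standard three-exponent H\"older argument for Young's convolution inequality. The paper, however, does not supply a proof of this theorem at all: it merely states it as a classical result and then invokes it through Corollaries~\ref{corollary:l_1_convolution} and~\ref{corollary:l_infty_convolution} in the appendix. So there is nothing to compare against; you have simply filled in a proof the authors chose to omit.
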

\begin{corollary} \label{corollary:l_1_convolution}
    Let $g\in L^1, h\in L^1$ then $g*h\in L^1$.
\end{corollary}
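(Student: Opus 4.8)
\textbf{Proof proposal for Corollary \ref{corollary:l_1_convolution}.}
The plan is to obtain the claim as the special case $p = q = r = 1$ of Young's convolution inequality (Theorem \ref{theorem:youngs_convolution}). First I would check that the exponents are admissible: the hypothesis of Theorem \ref{theorem:youngs_convolution} requires $\frac{1}{p} + \frac{1}{q} = \frac{1}{r} + 1$, and substituting $p = q = r = 1$ gives $1 + 1 = 1 + 1$, which holds. Since $g \in L^1$ and $h \in L^1$ by assumption, Theorem \ref{theorem:youngs_convolution} then immediately yields $g * h \in L^1$, which is exactly the assertion.

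As an alternative self-contained argument (useful if one prefers not to invoke the full strength of Young's inequality), I would proceed directly from Definition \ref{definition:convolution} using Tonelli's theorem. One estimates
\begin{align}
    \int_{-\infty}^{\infty} \left| (g * h)(x) \right| dx
    &\leq \int_{-\infty}^{\infty} \int_{-\infty}^{\infty} \left| g(t) \right| \left| h(x - t) \right| \, dt \, dx,
\end{align}
then swaps the order of integration (justified by Tonelli, since the integrand is nonnegative and measurable), and uses translation invariance of Lebesgue measure to get
\begin{align}
    \int_{-\infty}^{\infty} \int_{-\infty}^{\infty} \left| g(t) \right| \left| h(x - t) \right| \, dx \, dt
    = \left( \int_{-\infty}^{\infty} \left| g(t) \right| dt \right) \left( \int_{-\infty}^{\infty} \left| h(u) \right| du \right) < \infty.
\end{align}
This shows $\| g * h \|_{L^1} \leq \| g \|_{L^1} \| h \|_{L^1} < \infty$, so $g * h \in L^1$; as a byproduct the convolution integral in Definition \ref{definition:convolution} converges for almost every $x$.

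There is essentially no substantive obstacle here: the result is a routine consequence of the theorem already stated, and the only point requiring the slightest care is the measurability/finiteness bookkeeping needed to apply Tonelli (or, equivalently, the verification that the exponent triple $(1,1,1)$ satisfies the constraint of Theorem \ref{theorem:youngs_convolution}). In the write-up I would simply state the one-line deduction from Young's inequality and, if space permits, append the direct Tonelli computation as a remark.
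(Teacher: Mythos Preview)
Your proposal is correct and matches the paper's approach: the corollary is stated immediately after Young's convolution inequality and is intended as the trivial specialization $p=q=r=1$, exactly as you deduce. The alternative Tonelli argument is also fine but goes beyond what the paper does (the paper gives no explicit proof at all).
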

\begin{corollary} \label{corollary:l_infty_convolution}
    Let $g\in L^1, h\in L^\infty$ then $g*h\in L^\infty$.
\end{corollary}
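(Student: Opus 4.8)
The plan is to prove the inclusion directly from the elementary pointwise bound that underlies it, rather than by formally invoking Theorem~\ref{theorem:youngs_convolution} with an infinite exponent (though one should note that this is precisely the endpoint case $p=1$, $q=r=\infty$ of that inequality, since $\frac1p+\frac1q=1+0=0+1=\frac1r+1$). Fix $g\in L^1$ and $h\in L^\infty$, and let $x\in\R$ be arbitrary. First I would check that the integral in Definition~\ref{definition:convolution} is actually well defined for this $x$: the map $t\mapsto g(t)h(x-t)$ is measurable, and $|g(t)h(x-t)|\le\|h\|_\infty\,|g(t)|$ for almost every $t$, so the integrand is dominated by an $L^1$ function and is therefore itself integrable; hence $(g*h)(x)$ is a finite, well-defined number for every $x$ (not merely almost every $x$).

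The second step is the uniform estimate. Applying the triangle inequality for integrals together with the same pointwise domination gives
\begin{align}
    |(g*h)(x)| \le \int_{-\infty}^{\infty} |g(t)|\,|h(x-t)|\,dt \le \|h\|_\infty \int_{-\infty}^{\infty} |g(t)|\,dt = \|g\|_1\,\|h\|_\infty .
\end{align}
Since the right-hand side is independent of $x$, the function $g*h$ is bounded by $\|g\|_1\|h\|_\infty<\infty$; that is, $g*h\in L^\infty$ (indeed $\|g*h\|_\infty\le\|g\|_1\|h\|_\infty$), which is exactly the claim.

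There is essentially no obstacle here: the only point that merits a moment's care is the first step, namely ensuring the convolution is defined \emph{everywhere} so that the bound is a genuine pointwise statement, and this is immediate from the dominated-integrand argument. In the intended use $g$ will be a probability density, so $\|g\|_1=1$ and the corollary says that convolving a bounded function with a probability density never increases its sup-norm; this is the form in which it is invoked to propagate boundedness through the successive convolutions in the proof of Proposition~\ref{propositon:f5_is_cont_diff}.
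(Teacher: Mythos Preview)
Your proof is correct. The paper itself gives no proof for this corollary: it simply records the statement as an immediate consequence of Theorem~\ref{theorem:youngs_convolution} (the endpoint case $p=1$, $q=r=\infty$, exactly as you note). Your direct argument via the pointwise bound $|g(t)h(x-t)|\le\|h\|_\infty|g(t)|$ is the standard self-contained derivation and in fact yields the explicit norm inequality $\|g*h\|_\infty\le\|g\|_1\|h\|_\infty$, together with the observation that the convolution is defined at \emph{every} point rather than almost every point; both of these refinements are implicitly used later in the appendix, so your version is, if anything, slightly more informative than the paper's one-line appeal to Young.
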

\begin{theorem} \label{theorem:convolution_continutiy}
    Let $g\in L^1, h\in L^\infty$ then $g*h$ is uniformly continuous.
\end{theorem}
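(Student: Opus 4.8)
\emph{Proof proposal.}

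The plan is to reduce the statement to the continuity of translation in $L^1$, which gives a bound on the increment of $g*h$ that is uniform in the base point. First I would write the convolution as $(g*h)(x)=\int_{-\infty}^{\infty} g(t)\,h(x-t)\,dt$, and for $\delta\in\R$ apply the change of variables $s=t-\delta$ in the integral defining $(g*h)(x+\delta)$ to get $(g*h)(x+\delta)=\int_{-\infty}^{\infty} g(s+\delta)\,h(x-s)\,ds$. Subtracting,
\begin{equation}
    (g*h)(x+\delta)-(g*h)(x)=\int_{-\infty}^{\infty}\bigl(g(s+\delta)-g(s)\bigr)\,h(x-s)\,ds ,
\end{equation}
so that, using $h\in L^\infty$ and writing $\tau_\delta g(s)\triangleq g(s+\delta)$,
\begin{equation}
    \bigl|(g*h)(x+\delta)-(g*h)(x)\bigr|\ \le\ \|h\|_\infty\int_{-\infty}^{\infty}\bigl|g(s+\delta)-g(s)\bigr|\,ds\ =\ \|h\|_\infty\,\|\tau_\delta g-g\|_{1}.
\end{equation}
The crucial point is that the right-hand side does not depend on $x$.

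Next I would invoke the standard fact that translation acts continuously on $L^1$, i.e.\ $\|\tau_\delta g-g\|_{1}\to 0$ as $\delta\to 0$ for every $g\in L^1$. If a self-contained argument is desired, this is proved by density: given $\varepsilon>0$, choose a compactly supported continuous $\tilde g$ with $\|g-\tilde g\|_{1}<\varepsilon/3$ (such functions are dense in $L^1$); since $\tilde g$ is uniformly continuous with support in some $[-M,M]$, for $|\delta|\le 1$ one has $\|\tau_\delta\tilde g-\tilde g\|_{1}\le C_M\sup_{s}|\tilde g(s+\delta)-\tilde g(s)|<\varepsilon/3$ once $|\delta|$ is small enough, with $C_M$ depending only on $M$; and, since translation preserves the $L^1$ norm, $\|\tau_\delta g-g\|_{1}\le\|\tau_\delta(g-\tilde g)\|_{1}+\|\tau_\delta\tilde g-\tilde g\|_{1}+\|\tilde g-g\|_{1}<\varepsilon$.

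Finally I would combine the two displays. The case $\|h\|_\infty=0$ is trivial (then $g*h=0$ a.e.), so assume $\|h\|_\infty>0$; given $\varepsilon>0$, pick $\delta_0>0$ with $\|\tau_\delta g-g\|_{1}<\varepsilon/\|h\|_\infty$ whenever $|\delta|<\delta_0$. Then $|(g*h)(x+\delta)-(g*h)(x)|<\varepsilon$ for all $x\in\R$ and all $|\delta|<\delta_0$, which is exactly the uniform continuity of $g*h$. The only non-routine ingredient is the $L^1$-continuity of translation, so I expect its density/approximation argument to be the main obstacle, though it is entirely standard; everything else is a one-line change of variables and an application of Corollary~\ref{corollary:l_infty_convolution}-type estimates.
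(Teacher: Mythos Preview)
Your argument is correct and is the standard textbook proof of this fact. The paper itself does not prove Theorem~\ref{theorem:convolution_continutiy}; it merely states it as a well-known result from real analysis and then applies it in the proof of Proposition~\ref{propositon:f5_is_cont_diff}, so there is no paper proof to compare against.
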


\proof{
(Proposition \ref{propositon:f5_is_cont_diff})
Unlike for the sums of squared Gaussian random variables $G_i^2$, we can not explicitly compute the density of 
$(sU+Z)_i^2$. Thus, showing that the sums of 5 such independent random variables require more careful analysis. We will proceed by first calculating the density $f=p_{(sU+Z)^2}$ explicitly. We will show that the density of $\sum_{i=1}^3 (sU+Z)_i^2$ is continuous, and finally we show density of $\sum_{i=1}^5 (sU+Z)_i^2$ is continuously differentiable. The density of $(sU+Z)_i$ is:
\begin{align}
    p_{(sU+Z)_i}(x) 
    &= \int_\R p_{Z_i}(t) p_{sU_i}(x-t) dt 
    = \frac{1}{s} \int_{x-\frac{s}{2}}^{x+\frac{s}{2}} p_{Z_i}(t) dt 
    = \frac{1}{s}\left(\pr{Z_i\leq x+\frac{s}{2}} - \pr{Z_i\leq x-\frac{s}{2}}\right) \\
    &=
    \begin{cases}
            0 & \text{if } x \leq -\frac{s}{2}, \\
            \frac{1}{s}\left[1 - \exp{\left(-(\frac{x+\frac{s}{2}}{\lambda})^k\right)}\right]
                & \text{if } -\frac{s}{2} < x \leq \frac{s}{2}, \\
            \frac{1}{s}\left[\exp{\left(-(\frac{x-\frac{s}{2}}{\lambda})^k\right)} - \exp{\left(-(\frac{x+\frac{s}{2}}{\lambda})^k\right)} \right]
                & \text{if } \frac{s}{2} < x.
    \end{cases} \nonumber
\end{align}
which follows from the cumulative density function of Weibull distribution $\pr{Z \leq x} = 1- \exp{\left(-(\frac{x}{\lambda})^k\right)}$.
The density of $(sU+Z)_i^2$, denoted $f$ for convenience, is:
\begin{align}
    f = p_{(sU+Z)_i^2}(x) =     
    \begin{cases}
            0 & \text{if } x \leq 0, \\
            \frac{1}{s\sqrt{x}}\left[2 - \exp{\left(-(\frac{\sqrt{x}+\frac{s}{2}}{\lambda})^k\right)} - \exp{\left(-(\frac{-\sqrt{x}+\frac{s}{2}}{\lambda})^k\right)}\right]
                & \text{if } 0 < x \leq \frac{s^2}{4}, \\
            \frac{1}{s\sqrt{x}}\left[\exp{\left(-(\frac{\sqrt{x}-\frac{s}{2}}{\lambda})^k\right)} - \exp{\left(-(\frac{\sqrt{x}+\frac{s}{2}}{\lambda})^k\right)} \right]
                & \text{if } \frac{s^2}{4} < x.
    \end{cases} \label{eqn:f_density}
\end{align}
We denote the density of sum of $N$ variables, which is the $N$-fold convolution of the density function $f$, as:
\begin{equation}
    p_{\sum_{i=1}^N (sU+Z)_i^2}(x) = \underbrace{f*f*\ldots*f}_{N\text{-times}} = f^{*N}.
\end{equation}
First, let us note that for all $N\in\N, f^{*N}\in L^1$ which is obvious considering they are density functions (also follows Corollary \ref{corollary:l_1_convolution}).
To show $f^{*2}$ is bounded, note that 
\begin{align}
    f\leq\frac{2u(x)}{s\sqrt{x}}, \text{ where } u(x)=
    \begin{cases}
        0 & \text{ if } x \leq 0,\\ 
        1 & \text{ if } 0 < x.\\ 
    \end{cases}
\end{align} 
Then, since both functions are positive 
\begin{align}
f^{*2} \leq \left(\frac{2u(x)}{s\sqrt{x}}\right)^{*2}=\frac{4\pi u(x)}{s^2} \leq \frac{4\pi}{s^2} < \infty,    
\end{align}
combined with $f^{*2}\geq0$, it implies $f^{*2}\in L^{\infty}$. 
Thus, since $f\in L^1, f^{*2}\in L^\infty$:
\begin{equation}
    f^{*2}*f = f^{*3}\in L^\infty \text{ and } f^{*3} \text { is continuous},
\end{equation}
by Corollary \ref{corollary:l_infty_convolution} and Theorem \ref{theorem:convolution_continutiy}.
Applying this reasoning inductively, for all $N\geq 3$, $f^{*N}$ is bounded (in $L^\infty$) and continuous - in particular, so is $f^{*5}$.

In general, the derivative of the convolution is the convolution of the derivative $D (g*h) = (Dg) * h$. The problem of applying this to $f$ directly is that $Df \notin L^1$, since it has a singularity of the order $x^{-\sfrac{3}{2}}$ at the origin. Thus, we will proceed by separating $f$ into a normalized part $\fnice$, and the singularity. We shall then show that $D\fnice \in L^1$ and we will explicitly calculate the contribution of the singularity.

Let $g(x)=\sqrt{x} f(x)$, $\cnst = \lim_{x\to0^+} g(t) = 2s^{-1} \left[1 - \exp{\left(-(\sfrac{s}{2\lambda})^k\right)}\right]$
and $\fnice = f - \frac{u(t)C}{\sqrt{x}}$. We can express $\fnice$ around $x>0$ as:
\begin{align}
    \fnice &= f - \frac{C}{\sqrt{x}} = \frac{g(x)}{\sqrt{x}} - \frac{C}{\sqrt{x}} \\
    &= \frac{\lim_{t\to 0^+} g(t) + x \lim_{t\to 0^+} Dg(t) + \frac{x^2}{2} \lim_{t\to 0^+}D^2 g(t) + O(x^3)}{\sqrt{x}} - \frac{C}{\sqrt{x}} \\
    &= x^{\sfrac{1}{2}} (\lim_{t\to 0^+} Dg(t)) + \frac{x^{\sfrac{3}{2}}}{2} (\lim_{t\to 0^+} D^2g(t)) + O\left(x^{\sfrac{5}{2}}\right).
\end{align}
All the $\lim_{t\to 0^+}D^N g(t)$ terms are just the $N$-th derivatives of $s^{-1}\left[\exp{\left(-(\frac{\sqrt{x}-\frac{s}{2}}{\lambda})^k\right)} - \exp{\left(-(\frac{\sqrt{x}+\frac{s}{2}}{\lambda})^k\right)} \right]$ and are readily computable. 
The derivative  $D\fnice, x>0$ around $0$ is proportional to $x^{\sfrac{-1}{2}}$ which is locally integrable. 
The derivative $Df$ has another discontinuity if Weibull parameter $k<1$ around $x=\frac{s^2}{4}$ of the form $h(x)\left(x-\frac{s^2}{4}\right)^{-1+k}$, where $h(x)$ is a continuous, bounded function around $\frac{s^2}{4}$. Since $k>0$, the exponent $-1+k>-1$, and thus, the singularity is integrable. Therefore, $Df$ is locally integrable around $\frac{s^2}{4}$ and so is $D\fnice$. 
The right tail $x>\frac{s^2}{4}+1$ of $Df$ can be computed explicitly from $f$ (Equation \ref{eqn:f_density}) and its absolute value is bounded by an integrable function proportional to $u(x)e^{x^{-\sfrac{k}{2}}}$.
The derivative of the term $\frac{Cu(x)}{\sqrt{x}}$ is $Cu(x)x^{\sfrac{-3}{2}}$ which is integrable on $(a,\infty)$ for $a>0$. Thus, $\fnice = f - \frac{Cu(x)}{\sqrt{x}}$ has an integrable right tail.
Therefore, $D\fnice \in L^1$ since:
\begin{align}
    \int_{-\infty}^\infty |D\fnice(x)| dx = 
    \int_{-\infty}^0 |D\fnice(x)| dx +
    \int_{0}^{\frac{s^2}{4}+1} |D\fnice(x)| dx +
    \int_{\frac{s^2}{4}+1}^\infty |D\fnice(x)| dx < \infty,
\end{align}
where integrability of the first integral follows $\fnice=0$ for $x<0$, the middle one from integrability around $x=0$, $x=\frac{s^2}{4}$ and boundedness otherwise, while the last integral follows integrability of the right tail.

Now, we can show $f^{*5}$ is continuously differentiable:
\begin{align}
    D f^{*5} &= D (f^{*3} * f^{*2}) \\
    &= D\left( f^{*3} * \left(\fnice + \frac{Cu(x)}{\sqrt{x}}\right)^{*2} \right) \\
    &= D\left( f^{*3} * \left(\fnice * \fnice + 2\fnice * \frac{Cu(x)}{\sqrt{x}} + \frac{Cu(x)}{\sqrt{x}} * \frac{Cu(x)}{\sqrt{x}} \right) \right) \\
    &= D\left( f^{*3} * \fnice * \left( \fnice + 2\frac{Cu(x)}{\sqrt{x}} \right) \right)  
    + D\left( f^{*3} * \frac{Cu(x)}{\sqrt{x}} * \frac{Cu(x)}{\sqrt{x}} \right)\\
    &= f^{*3} * \left(D\fnice\right) * \left( \fnice + 2\frac{Cu(x)}{\sqrt{x}} \right)  
    + D\left( f^{*3} * C\pi u(x) \right)\\
    &= f^{*3} * \left(D\fnice\right) * \left( f + \frac{Cu(x)}{\sqrt{x}} \right)  
    +  f^{*3} * C\pi \delta \\
    &= f^{*3} * \left(D\fnice\right) * \left( f + \frac{Cu(x)}{\sqrt{x}} \right)  
    + C\pi f^{*3},
\end{align}
where the derivative of unit step function $u(x)$ is the delta distribution $\delta$. Then,
\begin{align}
    &f^{*3}\in L^1, D\fnice \in L^1 \Rightarrow f^{*3} * D\fnice \in L_1, \\
    &f^{*3}\in L^\infty, D\fnice \in L^1 \Rightarrow f^{*3} * D\fnice \in L_\infty.
\end{align}
We can split the term:
\begin{align}
    d = f + \frac{Cu(x)}{\sqrt{x}} =
    f + \frac{C\left(u(x)-u(x-1)\right)}{\sqrt{x}} + \frac{Cu(x-1)}{\sqrt{x}} = d_L + d_R
\end{align}
where
\begin{align}
    f + \frac{C\left(u(x)-u(x-1)\right)}{\sqrt{x}} = d_L &\in L^1, \\
    \frac{Cu(x-1)}{\sqrt{x}} = d_R &\in L^\infty. \\
\end{align}
Then,
\begin{align}
    f^{*3} * \left(D\fnice\right) \in L^\infty,  d_L \in L^1 &\implies f^{*3} * \left(D\fnice\right) * d_L \in L^\infty \text{ and continuous}, \\
    f^{*3} * \left(D\fnice\right) \in L^1,  dR \in L^\infty &\implies f^{*3} * \left(D\fnice\right) * d_R \in L^\infty \text{ and continuous},  \\
\end{align}
\begin{align}
    &\implies f^{*3} * \left(D\fnice\right) * d \in L^\infty \text{ and continuous}, \\
    &\implies f^{*3} * \left(D\fnice\right) * d + C\pi f^{*3} = D(f^{*5})\in L^\infty \text{ and continuous.}
\end{align}

}

\end{document}